\documentclass[11pt]{article}

\usepackage[T1]{fontenc}
\usepackage[utf8]{inputenc}
\usepackage{lmodern}
\usepackage[a4paper,margin=1in]{geometry}
\usepackage{microtype}
\usepackage{amsmath,amssymb,amsthm,mathtools}
\usepackage{booktabs}
\usepackage{enumitem}
\usepackage{xcolor}
\usepackage{graphicx}
\usepackage{float}
\usepackage{tikz}
\usetikzlibrary{arrows.meta,positioning}
\usepackage{multirow}
\usepackage{todonotes}
\usepackage{aliascnt}
\usepackage{hyperref}
\usepackage[nameinlink,capitalise,noabbrev]{cleveref}

\hypersetup{
  colorlinks=true,
  linkcolor=blue!55!black,
  citecolor=green!40!black,
  urlcolor=blue!65!black,
  pdftitle={A Near-Linear Element-Kernel for d-Hitting Set}
}

\theoremstyle{plain}
\newtheorem{theorem}{Theorem}[section]

\newaliascnt{lemma}{theorem}
\newtheorem{lemma}[lemma]{Lemma}
\aliascntresetthe{lemma}

\newaliascnt{corollary}{theorem}

\aliascntresetthe{corollary}

\newaliascnt{proposition}{theorem}
\newtheorem{proposition}[proposition]{Proposition}
\aliascntresetthe{proposition}

\theoremstyle{definition}

\newaliascnt{definition}{theorem}
\newtheorem{definition}[definition]{Definition}
\aliascntresetthe{definition}

\newaliascnt{remark}{theorem}

\aliascntresetthe{remark}

\crefname{theorem}{Theorem}{Theorems}
\Crefname{theorem}{Theorem}{Theorems}
\crefname{lemma}{Lemma}{Lemmas}
\Crefname{lemma}{Lemma}{Lemmas}
\crefname{corollary}{Corollary}{Corollaries}
\Crefname{corollary}{Corollary}{Corollaries}
\crefname{proposition}{Proposition}{Propositions}
\Crefname{proposition}{Proposition}{Propositions}
\crefname{definition}{Definition}{Definitions}
\Crefname{definition}{Definition}{Definitions}
\crefname{remark}{Remark}{Remarks}
\Crefname{remark}{Remark}{Remarks}

\newcommand{\F}{\mathbb F}
\newcommand{\HS}{\ensuremath{\text{\normalfont\scshape \(d\)-Hitting Set}}}

\title{A Near-Linear Element-Kernel for \(d\)-Hitting Set}

\author{
Zimo Sheng, Mingyu Xiao\textsuperscript{}\\[0.5em]
\href{mailto:shengzimo2016@gmail.com}
{\texttt{shengzimo2016@gmail.com}}
\qquad
\href{mailto:myxiao@gmail.com}
{\texttt{myxiao@gmail.com}}\\[0.25em]
}

\date{}

\begin{document}
\maketitle

\begin{abstract}
In \(d\)-\textsc{Hitting Set}, the input consists of a finite universe
\(U\), a family \(\mathcal S\) of subsets of \(U\) with size at most
\(d\), and an integer \(k\). The task is to decide whether at most \(k\)
elements of \(U\) can intersect every set in \(\mathcal S\).
For every fixed \(d\geq3\), we give a one-sided randomized kernel with
\(O(k\log^3k)\) elements and a deterministic kernel with
\(O(k^2\log k)\) elements for \(d\)-\textsc{Hitting Set}. 
In the one-sided randomized kernel, every NO-instance is always mapped to a NO-instance, and a YES-instance is mapped to a YES-instance with constant probability.
The previously known kernels for \(d\)-\textsc{Hitting Set} contain
\(O(k^{d-1})\) elements and \(O(k^d)\) sets. It has been asked in the
literature whether \(d\)-\textsc{Hitting Set} allows kernels with
\(O(k^{d-1-\varepsilon})\) elements for some constant
\(\varepsilon>0\). In this paper, we answer this question affirmatively by giving
near-linear element-kernels through a re-encoding of the
instance. On the other hand,
our kernel may still contain \(k^{O(d)}\) sets and the parameter $k$ may grow polynomially.
\end{abstract}

\section{Introduction}

An instance $I=(U, \mathcal S, k)$ of \HS{} consists of a finite
universe \(U\), a family \(\mathcal S\subseteq 2^U\) in which every set
has size at most \(d\), and a nonnegative integer \(k\). The question is
whether there is a set \(Z\subseteq U\) of size at most \(k\) that
intersects every member of \(\mathcal S\). Equivalently, \(Z\) is a
transversal of a hypergraph of rank at most \(d\). This is a central
covering problem in parameterized complexity and provides a common
formulation for many finite-obstruction deletion problems.
In this paper, we will assume that \(d\) is a fixed constant and study
kernelization for \(d\)-\textsc{Hitting Set}.

The number of elements and the number of sets in \HS{}
are two quantities frequently used to measure kernel size. The earliest known
\(O(k^d)\)-set kernel is due to Fellows et al., while the now-standard
sunflower-based formulation can be traced to Flum and Grohe and is
presented in later textbooks
\cite{fellows2008faster,DBLP:series/txtcs/FlumG06,CyganEtAl2015,FominEtAl2019Book,FominEtAl2023}.
Reduction
rules due to Abu-Khzam, together with their later combination with
sunflower reductions by van Bevern, give kernels with
\(O(k^{d-1})\) elements and \(O(k^d)\) sets
\cite{AbuKhzam2010,vanBevern2014}. Liu and Xiao recently improved the
leading coefficient in the element bound without changing its
exponent \cite{LiuXiao2025}.

These two bounds address different questions. Dell and van Melkebeek
exclude compressions of bit size \(O(k^{d-\varepsilon})\) for every
fixed \(d\) and \(\varepsilon>0\), unless
\(\mathsf{coNP}\subseteq\mathsf{NP/poly}\)
\cite{DellVanMelkebeek2014}. This sparsification barrier makes the
\(O(k^d)\) information bound essentially tight, but it does not rule
out an equivalent instance with far fewer distinct elements and
polynomially many bounded-size sets. Whether the \(O(k^{d-1})\) element bound can be improved by a
polynomial factor has remained a longstanding open question
\cite{worker2010,fptSchool2014,FominEtAl2023}. A central formulation asks whether
the degree of the element bound can be made independent of the fixed
rank \(d\) \cite{FominEtAl2019Book,FominEtAl2023}. Stronger questions
ask for a linear element bound
\cite{worker2010,fptSchool2014,FominEtAl2023}.

The kernel bounds mentioned above were mainly obtained through
combinatorial reductions, including crown and sunflower techniques
\cite{AbuKhzam2010,vanBevern2014}.
Some of these reductions also replace input sets by smaller subsets
\cite{AbuKhzam2010,LiuXiao2025}.
If we allow only the deletion of elements and whole sets, the output
is a subhypergraph of the input and we may call this kind of kernels \textit{subhypergraph kernels}. For every fixed \(d\geq3\) and
\(0<\varepsilon<d-1\), a deterministic subhypergraph kernel with
\(O(k^{d-1-\varepsilon})\) elements would imply
\(\mathsf{coNP}\subseteq\mathsf{NP/poly}\), as shown in
\Cref{prop:induced-lower-bound}.

This limitation motivates us to ask whether a smaller kernel
can be obtained by encoding a small solution without requiring the
output to preserve the structure of the input hypergraph.
Note that Bannach and Tantau used color-coding to compute
Hitting Set kernels by constant-depth circuits \cite{BannachTantau2018}.
This also suggests that hashing might be 
a good way to encode and reduce the number of elements in a kernel.
 Our idea is to use fewer Boolean
variables to represent the elements of a solution by hashing, and use CNF clauses
to ensure that this representation describes a valid hitting set. With the help of this idea, finally, we obtain our main results.
For every fixed \(d\geq3\), we give a one-sided randomized kernel with
\(O(k\log^3 k)\) elements and \(O(k^d\log^{2d-2} k)\) sets, as well as
a deterministic kernel with \(O(k^2\log k)\) elements and
\(O(k^{d+2}\log^{d-2}k)\) sets.

Our results concern general explicit set systems. Smaller universes
were previously obtained when additional structure or a different
correctness guarantee was available. Several problems that implicitly
represent rank-three hitting or packing instances admit
equivalence-preserving subquadratic kernels
\cite{FominEtAl2019Implicit}, and later structure-aware arguments give
linear or almost-linear vertex bounds in some of these settings
\cite{bessy2023kernelization}. For general fixed rank, lossy
kernelization reduces the number of elements to \(O(k)\) by allowing a
controlled approximation loss
\cite{lokshtanov2017lossy,FominEtAl2023}. These results demonstrate the
importance of the element measure, but they do not resolve the explicit
decision problem considered here.

\subsection{Results}
\label{sec:intro-results}
Our main results are two kernels for \HS{}, one is randomized and one is deterministic.
The randomized kernel has one-sided error. It maps every NO-instance
to a NO-instance with probability one, while every fixed YES-instance
is mapped to a YES-instance with probability at least \(3/4\).
They are presented in the following two theorems. Their element and set
bounds and output parameters are summarized in Table~\ref{tab:results}.

\begin{theorem}
\label{thm:intro-random-two-level}
For every fixed \(d\geq3\), \HS{} admits a one-sided randomized kernel
with \(O(k\log^3k)\) elements, \(O(k^d\log^{2d-2}k)\) sets,
output parameter \(k'=O(k\log^3k)\), and success probability
at least \(3/4\).
\end{theorem}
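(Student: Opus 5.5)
The plan is to start from the known kernel of \cite{AbuKhzam2010,vanBevern2014} and then re-encode the instance by hashing. That kernel leaves an equivalent instance with $n=O(k^{d-1})$ elements and $O(k^d)$ sets, so every element has an identifier of $O(d\log k)$ bits. We may also assume $k\ge 2$, and that the new budget is set so that it is exactly tight.

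The first step is a two-level random hash. Draw $h\colon U\to[m]$ with $m=\Theta(k)$. For a fixed solution $Z$ with $|Z|\le k$, each bucket receives at most $c=O(\log k)$ elements of $Z$ with probability at least $7/8$. This follows from a Chernoff bound and a union bound over the buckets. Each bucket then gets $c$ slots. The elements of a bucket are further split by a second hash $g$ into $O(\log^2 k)$ classes. This makes the chosen solution elements in each bucket land in distinct classes with probability at least $7/8$, by a union bound over $O(k\log^2 k)$ colliding pairs. In total the YES-direction fails with probability at most $1/4$.

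The second step is the encoding. Each slot stores a class index (one-hot over $O(\log^2k)$ positions) and the $O(\log k)$-bit identifier of the element it holds. Negation is simulated in the standard way: every Boolean variable $x$ becomes two elements $x^1,x^0$, with a set $\{x^1,x^0\}$. The parameter $k'$ is the number of variables, so $k'=O(m\cdot c\cdot\log^2k)=O(k\log^3k)$, and any solution of size $k'$ picks exactly one element from each pair. Thus a solution is the same as a truth assignment, and the number of elements is $O(k\log^3k)$. A literal is now just an element, and clauses of the CNF with at most $d$ literals are exactly sets of the new instance.

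The third step translates each original set $S=\{u_1,\dots,u_d\}$. The condition is that some $u_i$ is selected, meaning it occupies some slot of its bucket $h(u_i)$ with the matching class and identifier. This condition has to become CNF clauses of width at most $d$. I would introduce, for each pair (slot, class), a derived indicator whose identifier is forced by the auxiliary bit-variables. Consistency is imposed by width-2 clauses of the form ``slot $s$ has class $j$ $\Rightarrow$ bit $t$ of slot $s$ equals $\mathrm{id}(u)_t$''. These are added only for pairs that are relevant to some set, so that a class indicator effectively names one concrete element. Then $S$ becomes a clause over the class indicators of $u_1,\dots,u_d$, taken over all slot choices. Enumerating slot choices and distributing over them gives $O(\log^{2d-2}k)$ clauses per set, which matches the target of $O(k^d\log^{2d-2}k)$ sets.

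\textbf{Correctness.} A NO-instance must always map to a NO-instance. Decoding any satisfying assignment reads off, for each slot, the element named by its class and identifier. This yields at most $mc\le k$ elements, because there are at most that many slots and the bucket size $c$ is chosen with $mc$ not exceeding $k$. By the clauses, these elements hit every set, so every YES output certifies a YES input. In the other direction, a good hash lets us write $Z$ into the slots directly and satisfy all clauses.

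\textbf{Main obstacle.} The hardest step is the third one: expressing ``$u$ is selected'', which is a conjunction of $\Theta(\log k)$ literals, inside monotone clauses of width at most $d$ without a blow-up. The first thing I would try is to make the class index carry the identity. Within a bucket, distinct classes would contain pairwise distinct elements relevant to the encoding, which requires a second hashing or splitting step. Then a single one-hot indicator decides selection, and the identifier bits only guard the decoding via width-2 clauses.
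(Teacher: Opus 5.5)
Your proposal has genuine gaps in the second-level hashing, in the third step, and in the NO-direction.

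\textbf{Second-level hashing.} A single hash $g$ into $O(\log^2k)$ classes cannot give constant success probability. Your union bound over $O(k\log^2k)$ pairs, each colliding with probability $\Theta(1/\log^2k)$, yields $O(k)$, not $1/8$. Indeed, $(h,g)$ is a uniform map into $\Theta(k\log^2k)$ values, so for $|Z|=k$ the birthday bound shows it is injective on $Z$ with probability at most $\exp(-\Theta(k/\log^2k))$.

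\textbf{Width of the hitting clauses.} Your third step does not reach width $d$. A class of a bucket contains about $n/(m\log^2k)$ elements, which is polynomially many. So the width-$2$ clauses ``class $j$ $\Rightarrow$ bit $t$ equals $\mathrm{id}(u)_t$'' for two elements of one class contradict each other. They force that class indicator false and thereby lose solutions. Moreover, with generic slots, ``$u$ is selected'' is a disjunction over the $c$ slots of its bucket, so the hitting constraint has width $dc$. Imposing one clause per slot choice is not equivalent: by distributivity $\bigwedge_f\bigvee_i P_{i,f(i)}\equiv\bigvee_i\bigwedge_s P_{i,s}$, so some $u_i$ would have to occupy every slot.

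\textbf{Budget.} The claim $mc\le k$ is false. For $m=\Theta(k)$ the load bound needs $c=\Omega(\log k/\log\log k)$, so $mc\gg k$. You impose no cardinality constraint, so a satisfying assignment can decode to more than $k$ elements, and some NO-instances become YES-instances.

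\textbf{The missing idea.} You do not need a single-literal selection indicator. The paper makes the table position of every element a function of hash values, namely $(\widetilde h(v),g_{\sigma(\widetilde h(v))}(v))$. Then ``$v$ is selected'' is one conjunction of $L=O(\log k)$ literals: an activation bit plus address bits. Distributing a disjunction of at most $d$ such conjunctions gives at most $L^{d}$ clauses, each of width at most $d$.

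To make a determined position compatible with near-linear size, the paper does the following:
\begin{itemize}
\item It samples $t_2=O(\log k)$ independent second-level functions and lets the formula pick one per bucket via variables $w_{c,r}$. Each bucket then fails with probability at most $8^{-t_2}\le 1/(8q_1)$.
\item It imposes the constraint for a set $A$ for every local map $\rho\colon C_A\to[t_2]$. Auxiliary variables $p\leftrightarrow(\neg w_{c,r}\vee\cdot)$ make mismatched branches vacuously true without widening any clause.
\item A maximal packing ensures $|A\cap W|\le d-1$. This gives $t_2^{d-1}L^{d-1}=O(\log^{2d-2}k)$ clauses per set; without it the exponent would be $2d$.
\item The budget is enforced by a linear-size CNF counter on the activation and direct-selection variables.
\end{itemize}
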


\begin{theorem}
\label{thm:intro-deterministic}
For every fixed \(d\geq3\), \HS{} admits a deterministic kernel with
\(O(k^2\log k)\) elements, \(O(k^{d+2}\log^{d-2}k)\) sets,
and output parameter \(k'=O(k^2\log k)\).
\end{theorem}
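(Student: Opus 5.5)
The plan is to re-encode the instance rather than shrink it. I first apply the known deterministic kernel of Abu-Khzam and van Bevern \cite{AbuKhzam2010,vanBevern2014}, which gives $|U|=n=O(k^{d-1})$ and $|\mathcal S|=O(k^d)$. Then I assign every element $u\in U$ a codeword $c(u)=(c_1(u),\dots,c_t(u))\in\F_q^t$ using a Reed--Solomon code. Concretely, I identify $U$ injectively with polynomials of degree at most $D$ over $\F_q$, which needs $q^{D+1}\geq n$, and set $c_j(u)=p_u(\alpha_j)$ for distinct points $\alpha_j\in\F_q$. Two distinct codewords then agree in at most $D$ coordinates.

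The new universe is $U'=[t]\times\F_q$, so $|U'|=tq$. Element $(j,a)$ means ``some chosen element has $j$-th symbol $a$''. A chosen set $W\subseteq U'$ \emph{represents} $u$ if $(j,c_j(u))\in W$ for all $j$. The requirement ``some $u\in S$ is represented'' is a disjunction of conjunctions, and I distribute it into clauses. For every $S=\{u_1,\dots,u_\ell\}\in\mathcal S$ and every tuple $(j_1,\dots,j_\ell)\in[t]^\ell$, I add the set $\{(j_1,c_{j_1}(u_1)),\dots,(j_\ell,c_{j_\ell}(u_\ell))\}$, which has size at most $d$. By distributivity, $W$ hits all clauses coming from $S$ iff $W$ represents some element of $S$. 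The new parameter is $k'=kt$.

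Completeness is immediate: if $Z$ is a hitting set with $|Z|\leq k$, then $W=\{(j,c_j(u)):u\in Z,\,j\in[t]\}$ has size at most $kt$ and represents every element of $Z$.

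Soundness uses a counting argument. Let $Z^*$ be the set of elements represented by $W$; it hits every $S$ by construction. Suppose $|Z^*|\geq k+1$ and take $k+1$ of its elements. Their $(k+1)t$ required pairs collapse only through pairwise coincidences, of which there are at most $D$ per pair of elements. Hence
\[
|W|\geq (k+1)t-\tbinom{k+1}{2}D>kt
\]
whenever $t>kD(k+1)/2$, a contradiction. So $|Z^*|\leq k$ and the original instance is a YES-instance.

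The main obstacle is the parameter tradeoff. The counting above forces $t=\Theta(k^2D)$, and we need $q\geq t$ and $q^{D+1}\geq k^{d-1}$. The naive choice therefore gives about $k^4$ elements rather than $O(k^2\log k)$.

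To reach the stated bound, I would try the following refinements.
\begin{itemize}[nosep]
\item Take $D=\Theta(\log k/\log\log k)$ so that $q=O(k)$ suffices for injectivity.
\item Strengthen the coincidence bound to an averaged one. Each pair of codewords agrees on at most $D$ of the $t$ coordinates, and only coordinates where $W$ holds at least $k+1$ symbols can be ``shared'' in a harmful way.
\item Possibly split the budget per coordinate by enforcing exactly one symbol per (slot, coordinate) through complementary-pair sets $\{x^0,x^1\}$ on binary expansions. These sets force exactly one literal of each pair once $k'$ equals the number of pairs.
\end{itemize}
With $t=O(k)$ coordinates and $q=O(k\log k)$ symbols this would give $O(k^2\log k)$ elements and $k'=O(k^2\log k)$. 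The set count is $|\mathcal S|\cdot t^{d}$ times a polylogarithmic factor from the binary gadget, which is consistent with $O(k^{d+2}\log^{d-2}k)$. Checking that the refined soundness count survives with $t$ only linear in $k$ is the step I expect to be delicate.
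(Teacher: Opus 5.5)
\textbf{Verdict: genuine gap.} The part you actually prove gives a valid kernel but misses all three bounds, and the refinements meant to reach $O(k^2\log k)$ do not work.

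\textbf{What works, and what it gives.} Your base construction is a correct kernel. The distributed sets say exactly that some element of $S$ is represented, completeness is clear, and the bound $|W|\geq(k+1)t-\binom{k+1}{2}D>kt$ is valid. But it needs $t>\binom{k+1}{2}D$ and $q\geq t$. So for $D\geq1$ you get $tq=\Omega(k^4)$ elements, $k'=kt=\Omega(k^3)$, and $|\mathcal S|\,t^{d}$ sets (up to order $k^{3d}$), while $D=0$ just returns the input. None of the three bounds in the theorem follows, and for $d\leq5$ this is no improvement over the $O_d(k^{d-1})$ kernel you start from.

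\textbf{Why the refinement fails.} Your scheme enforces the budget only through $|W|\leq kt$. Soundness therefore needs every $k+1$ codewords to occupy more than $kt$ pairs, i.e.\ fewer than one symbol collision per coordinate on average. With $q=O(k\log k)$, $k+1$ generic codewords have in expectation about $\binom{k+1}{2}/q=\Theta(k/\log k)$ colliding pairs in each coordinate. They then occupy only about $(k+1)t-\Theta(tk/\log k)<kt$ pairs, whatever $t$ is. So the budget admits a $W$ representing $k+1$ elements, and a NO-instance whose $(k+1)$-element hitting sets are generic is mapped to a YES-instance. No sharper averaged count can rescue this, because the inequality you need is false; the same holds for any $q=o(k^2)$. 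Your slot idea breaks width instead: with $k$ slots, ``$(j,a)\in W$'' becomes a disjunction over slots of $\log q$-literal conjunctions. Distributing then gives clauses of width about $dk$, and represented elements are still controlled only by the same count.

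\textbf{The missing idea.} Bound the number of decoded elements exactly rather than through occupied symbols, and let a hash decide where each element is stored, so that ``$u$ is selected'' is a single conjunction. The paper does this as follows.
\begin{enumerate}[nosep]
\item It builds a set $C$ with $|C|=O_d(k^2)$ such that every input set contains a designated pair from a family $\Pi$ of $O_d(k^2)$ pairs. This uses a maximal packing plus the rule that an element lying in $k+1$ sets that are disjoint apart from it is forced.
\item Elements of $C$ get direct variables. Elements of $W=U\setminus C$ are stored in a table of $Q=O_d(k^2)$ rows, each with an activation bit and an $O(\log k)$-bit address.
\item The row of each element is fixed by one function of a $k$-perfect family of $t=O_d(k^2)$ polynomial hashes into $[q]$, with $q=O_d(k^2)$. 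This is where Reed--Solomon-style evaluation is used.
\item A selector $y_i$ with $\sum_i y_i=1$ picks the hash, and a cardinality counter imposes $\sum_c s_c+\sum_j a_j\leq k$. Since an active row matches at most one address, at most $k$ elements are decoded, with no counting loss.
\item The negative literals are harmless, because the formula is turned back into \HS{} by the variable-pair reduction (the complementary-pair device you mention). This gives $O_d(k^2\log k)$ elements and $k'=N$.
\item Width stays at most $d$ because the extra literal $\neg y_i$ is paid for by replacing the two literals of $\pi(A)$ with one variable $\omega_{\pi(A)}$. Each set then has at most $d-2$ address conjunctions, which distribute into $L^{d-2}$ clauses, for $t\,|\mathcal S|\,L^{d-2}=O_d(k^{d+2}\log^{d-2}k)$ sets.
\end{enumerate}
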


\begin{table}[H]
\centering
\renewcommand{\arraystretch}{1.4}
\setlength{\tabcolsep}{5pt}
\resizebox{\linewidth}{!}{%
\begin{tabular}{@{}ccccc@{}}
\toprule
Results & Kernel types & \# of elements & \# of sets
& Output parameter \(k'\)\\
\midrule

\Cref{thm:intro-random-two-level} with proof in \Cref{sec:randomized-kernels}
&
one-sided randomized
&
\(\displaystyle O\!\left(k\log^3k\right)\)
&
\(\displaystyle O\!\left(k^d\log^{2d-2}k\right)\)
&
\(\displaystyle O\!\left(k\log^3k\right)\)
\\
\addlinespace[3pt]

\Cref{thm:intro-deterministic} with proof in \Cref{sec:det}
&
deterministic
&
\(\displaystyle O\!\left(k^2\log k\right)\)
&
\(\displaystyle O\!\left(k^{d+2}\log^{d-2}k\right)\)
&
\(\displaystyle O\!\left(k^2\log k\right)\)
\\

\bottomrule
\end{tabular}%
}
\caption{Element and set bounds and output parameters of the two kernels.}
\label{tab:results}
\end{table}

\subsection{Technical Overview}
\label{sec:technical-overview}

Our kernelization consists of three major steps. In the first step, we
apply a known kernelization algorithm to reduce the input to an
equivalent instance \(\mathcal I_1\) with \(O(k^{d-1})\) elements and
\(O(k^d)\) sets. In the second step, we encode
\(\mathcal I_1\) as a CNF formula \(\mathcal I_2\) of width at most
\(d\), where width at most \(d\) means that every clause contains at
most \(d\) literals. Finally, in the third step, we transform
\(\mathcal I_2\) back into an instance \(\mathcal I_3\) of
\HS{} using a standard reduction. The first and third steps are
standard. Our main contribution is the compact Boolean encoding in the
second step. The reduction steps are illustrated below.

\begin{figure}[ht]
\centering
\begin{tikzpicture}[
  instance/.style={
    draw=black!65,
    line width=0.55pt,
    rounded corners=2.5pt,
    minimum width=3.0cm,
    minimum height=2.35cm,
    inner sep=0pt
  },
  hittinginstance/.style={
    instance,
    fill=green!9
  },
  cnfinstance/.style={
    instance,
    draw=blue!55!black,
    line width=0.9pt,
    fill=blue!7
  },
  probleminfo/.style={
    anchor=north,
    align=center,
    text width=2.85cm,
    inner sep=3pt,
    font=\small
  },
  sizeinfo/.style={
    anchor=south,
    align=center,
    text width=2.85cm,
    inner sep=2pt,
    font=\small
  },
  flowarrow/.style={
    -{Latex[length=2mm]},
    line width=0.75pt,
    draw=black!70
  }
]
  \node[hittinginstance] (i0) {};
  \node[probleminfo] at (i0.north) {
    \(d\)-\textsc{Hitting Set}\\[2pt]
    {\footnotesize
    \(\boldsymbol{\mathcal I_0=(U,\mathcal S,k)}\)}
  };
  \node[sizeinfo] at (i0.south) {
    {\footnotesize \(|U|\) elements}\\
    {\footnotesize \(|\mathcal S|\) sets}
  };

  \node[hittinginstance,right=10mm of i0] (i1) {};
  \node[probleminfo] at (i1.north) {
    \(d\)-\textsc{Hitting Set}\\[2pt]
    {\footnotesize
    \(\boldsymbol{\mathcal I_1=(U',\mathcal S',k')}\)}
  };
   \node[sizeinfo] at (i1.south) {
    {\footnotesize \(|U'|=O(k^{d-1})\)}\\
    {\footnotesize \(|\mathcal S'|=O(k^d)\)}\\
    {\footnotesize \(k'\leq k\)}
  };

  \node[cnfinstance,right=10mm of i1] (i2) {};
  \node[probleminfo] at (i2.north) {
    \(d\)-CNF\\[2pt]
    {\footnotesize
    \(\boldsymbol{\mathcal I_2=\Phi}\)}
  };
  \node[sizeinfo] at (i2.south) {
    {\footnotesize \(N\) variables}\\
    {\footnotesize \(M\) clauses}
  };

  \node[hittinginstance,right=10mm of i2] (i3) {};
  \node[probleminfo] at (i3.north) {
    \(d\)-\textsc{Hitting Set}\\[2pt]
    {\footnotesize
    \(\boldsymbol{\mathcal I_3=(U'',\mathcal S'',k'')}\)}
  };
    \node[sizeinfo] at (i3.south) {
    {\footnotesize \(|U''|=2N\)}\\
    {\footnotesize \(|\mathcal S''|\leq N+M\)}\\
    {\footnotesize \(k''=N\)}
  };

  \draw[flowarrow]
    (i0) -- node[above,font=\scriptsize]{Step~1} (i1);

  \draw[flowarrow]
    (i1) -- node[above,font=\scriptsize]{Step~2} (i2);

  \draw[flowarrow]
    (i2) -- node[above,font=\scriptsize]{Step~3} (i3);
\end{tikzpicture}

\caption{The three-step kernelization algorithm starting from the
original input instance \(\mathcal I_0\).}
\label{fig:kernelization-pipeline}
\end{figure}

The algorithm proceeds through Steps~1--3 in this order. To explain
the size bound, however, we begin with Step~3, since it shows why the
number of variables in \(\mathcal I_2\) is crucial. For every Boolean
variable, the reduction creates two elements representing its two
literals and adds a two-element set that forces the hitting set to
choose one of them. Each clause becomes another set containing the
elements corresponding to its literals, and the budget is set to the
number of variables. Consequently, a formula with \(N\) variables and
\(M\) clauses produces a \HS{} instance with exactly \(2N\) elements
and at most \(N+M\) sets. Thus, to obtain a kernel with few elements,
it suffices to construct a \(d\)-CNF formula with few variables. For
example, the deterministic construction in Section~5 gives
\(N=O(k^2\log k)\) variables and
\(M=O(k^{d+2}\log^{d-2}k)\) clauses.

We now return to Step~2, where the reduction in the number of variables
is achieved. Its central idea is a hash-based re-encoding that allows
several elements of the \HS{} instance to share the same Boolean
variables, while the CNF clauses constrain this shared representation
so that the two instances remain equivalent. This may substantially
reduce the number of variables, at the price of increasing the number
of clauses. A direct encoding would introduce one Boolean variable for
every element of \(\mathcal I_1\). This would retain the existing
\(O(k^{d-1})\) bound and yield no improvement. Our main idea is
therefore to encode only the elements selected by an unknown solution,
rather than all candidate elements individually. We first compute a
maximal family of pairwise disjoint input sets. If it contains more
than \(k\) sets, the instance is immediately a NO-instance. Otherwise,
their union contains only \(O(k)\) elements, which can be represented
directly. Moreover, maximality guarantees that every input set contains
one of these elements and hence contains at most \(d-1\) remaining
elements.

For the randomized kernel, the remaining elements are represented
through a small shared table. The representation uses two levels of
hashing. The first level distributes the elements of a potential
solution so that only a bounded number of them enter each bucket.
Within each first-level bucket, the second level separates the
remaining elements, and a short address identifies the element
represented by a table entry. The table records only the buckets and
addresses of the elements selected by a potential solution. Thus, many
candidate elements share the same Boolean variables. We do not need a
hash function that distinguishes all elements of the instance; it only
needs to distinguish the at most \(k\) elements contained in one
solution. Here the first level consists of one sampled function, and the formula
chooses a second-level function separately for each first-level bucket.
This is sufficient for our randomized kernel.

For the deterministic kernel, we need a fixed family of hash functions
that works for every possible solution. We strengthen the preliminary
step and find \(O(k^2)\) elements such that every input set contains
at least two of them. Two corresponding direct-selection literals
can then be replaced by one shared pair variable, leaving room for
the hash-selection literal without increasing the clause width.
The remaining elements use the same row--address representation,
while low-degree polynomials over a finite field supply the required
hash family explicitly.

Both constructions incorporate the hash choices without increasing
the clause width beyond \(d\). The CNF formula enforces the solution-size
bound and checks that every input set is hit. Whenever the selected hash
functions distinguish the elements of a solution, that solution can be
encoded in the shared table. Conversely, every satisfying assignment
decodes to a genuine hitting set. The latter implication holds for
every sampled hash family, which gives the randomized kernel its
one-sided error.

Section~3 develops the two-level hashing framework and its Boolean
encoding. Section~4 constructs the randomized hash families and proves
the randomized kernel. Section~5 develops the deterministic hashing
construction and proves the deterministic kernel. Finally, Section~6
gives some concluding comments.

\section{Preliminaries}\label{sec:preliminaries}


Fix a constant \(d\geq 3\). A set system is a pair \(\mathcal H=(U,\mathcal S)\), where \(U\) is a finite universe and \(\mathcal S\subseteq 2^U\).
We let $n = |U|$ and $m = |\mathcal{S}|$.
It has rank at most \(d\) if \(|A|\leq d\) for every \(A\in\mathcal S\).
An instance of \HS{} consists of such a set system and a nonnegative integer \(k\).
It is a YES-instance if there exists a set \(Z\subseteq U\) such that \(|Z|\leq k\) and \(Z\cap A\neq\varnothing\) for every \(A\in\mathcal S\).
Such a set \(Z\) is called a hitting set.

For a nonnegative integer \(r\), let
\([r]=\{1,\ldots,r\}\).
The notation \(O_d(\cdot)\) hides constants that may depend on the fixed
rank bound \(d\), but not on \(k\) or the input size.

\subsection{Kernelization}

A kernelization, or simply a kernel, is a polynomial-time algorithm that maps an instance \((I,k)\) to an instance \((I',k')\) of the same problem such that $(I,k)$ is a YES-instance if and only if $(I',k')$ is a YES-instance and $|I'|+k' \le f(k)$, where $f(k)$ is a computable function of $k$.

In this paper, a one-sided randomized kernel is allowed expected
polynomial running time and has the same output-size guarantee.
For every fixed input, it maps each NO-instance to a NO-instance with
probability one, while it maps each YES-instance to a YES-instance with
a stated success probability.


In this paper, we use the following known kernel for $d$-Hitting Set, which bounds both the number of universe elements and the number of sets \cite{AbuKhzam2010,LiuXiao2025}.


\begin{lemma}\label{lem:reduced-instance}
For every fixed \(d\), \HS{} admits a kernel
that preserves the parameter
with \(O_d(k^{d-1})\) elements and \(O_d(k^d)\) sets.
\end{lemma}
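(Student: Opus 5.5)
This is the classical Abu-Khzam bound combined with the sunflower kernel. I would establish it in two phases: first bound the number of sets by a sunflower-style reduction, then bound the number of elements by a crown-type reduction applied to the bounded family. Throughout, the parameter is never increased. Every rule either deletes sets that are superseded by smaller sets, or detects a NO-instance and outputs a trivial constant-size NO-instance with the same \(k\). So the output parameter is exactly \(k\).

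\textbf{Set bound.} I would apply the standard sunflower rule. If the family contains a sunflower with more than \(k\) petals and core \(C\), every hitting set of size at most \(k\) must hit \(C\). Indeed, if it missed \(C\), it would need a distinct element in each of the \(k+1\) pairwise disjoint petals. Hence we may replace the sunflower's sets by \(C\). If \(C=\varnothing\), the instance is NO.

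By the Erd\H{o}s--Rado lemma, any family of more than \(\ell!\,k^\ell\) sets of size exactly \(\ell\) contains a \((k+1)\)-sunflower. To make this polynomial-time, I would use the greedy construction: take a maximal disjoint subfamily, and recurse on the sets containing a fixed element. This yields \(O_d(k^d)\) sets after exhaustive application, with each size class \(\ell\le d\) handled separately. I would also remove supersets of other sets, which is safe.

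\textbf{Element bound.} This is the main obstacle. Elements lying in no set are deleted. To reach \(O_d(k^{d-1})\) elements, I would follow Abu-Khzam's argument.

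First take a maximal packing \(P\) of pairwise disjoint sets. If \(|P|>k\), answer NO. Otherwise \(X=\bigcup P\) has at most \(dk\) elements, and every set meets \(X\). For each subset \(Y\subseteq X\), consider the sets \(A\) with \(A\cap X=Y\), together with their traces \(A\setminus X\). These traces are sets of size at most \(d-1\) over \(U\setminus X\).

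I would then apply, inductively on the rank, a crown/expansion-lemma reduction to the bipartite incidence between the trace sets and the outside elements. If some \(Y\) has too many outside elements attached, the expansion lemma (or a sunflower of size \(k+1\) among the traces) produces a set of outside elements that may be discarded. Concretely, the sets containing them are replaced by their subsets within \(X\) together with the core. This gives \(O_d(k^{d-2})\) outside elements per bounded "type". Summing over the \(O_d(k)\) relevant elements of \(X\) gives \(O_d(k^{d-1})\).

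The delicate point is verifying that the replacement preserves equivalence and does not destroy the \(O_d(k^d)\) set bound. For this I would rerun the sunflower phase after each crown step, noting that each step strictly decreases a potential such as the sum of set sizes. Hence the process terminates in polynomial time. Since the statement is cited from \cite{AbuKhzam2010,LiuXiao2025}, I would ultimately rely on those analyses for the exact counting.
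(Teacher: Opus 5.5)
The paper does not prove this lemma; it imports it from Abu-Khzam and Liu--Xiao \cite{AbuKhzam2010,LiuXiao2025}, and the simultaneous \(O_d(k^d)\) set bound comes from van Bevern's combination with sunflower reductions \cite{vanBevern2014}, mentioned in the introduction. Your set-bound phase is the standard sunflower kernel and is fine, as are your parameter bookkeeping and your termination argument via the total set size. Your element-bound phase, however, does not go through as sketched.

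\textbf{Why the element phase fails.} A maximal \emph{disjoint} packing is the wrong starting structure, because a set may keep up to \(d-1\) elements outside \(X\). Neither of your two tools then gains a factor of \(k\).
\begin{enumerate}[label=(\roman*)]
\item A sunflower among the traces of a type \(Y\subseteq X\) bounds that trace family only by \(O_d(k^{d-|Y|})\) sets. Summed over the \(O(k^{|Y|})\) types, this gives \(O_d(k^d)\) elements, no better than the plain sunflower kernel.
\item A crown/expansion step on the incidence between traces and outside elements is not a safe rule. The swapping argument behind crown reduction needs every affected set to have the form \(P\cup\{i\}\), with \(P\) a head and \(i\) a single crown element, so that deleting \(i\) leaves exactly \(P\). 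With several outside elements per trace, this breaks down.
\item You cannot run an inductive \((d-1)\)-kernel on each trace family separately. Outside elements are shared among types, and crown steps do not preserve all small hitting sets. So a reduction found inside one type's family is not a reduction of the whole instance.
\end{enumerate}
Finally, your count (\(O_d(k^{d-2})\) per ``type'', summed over the \(O_d(k)\) elements of \(X\)) conflates types, which you defined as subsets of \(X\), with elements of \(X\). The per-type bound is also never justified.

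\textbf{The missing idea.} Abu-Khzam uses a different structure: an inclusion-maximal family \(\mathcal M\) of sets that pairwise share at most \(d-2\) elements.
\begin{itemize}
\item After superset removal, all sets of size at most \(d-1\) lie in \(\mathcal M\). Iterating the sunflower bound over cores of size \(t=d-2,\ldots,0\) gives \(|\mathcal M|=O_d(k^{d-1})\) (or a NO-answer). Hence \(V(\mathcal M)=\bigcup\mathcal M\) has \(O_d(k^{d-1})\) elements.
\item By maximality, every set outside \(\mathcal M\) shares at least \(d-1\) elements with a single member of \(\mathcal M\). It therefore has at most one element outside \(V(\mathcal M)\).
\item The incidence between outside elements and the at most \(d|\mathcal M|\) relevant \((d-1)\)-subsets of members is then a genuine bipartite graph. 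On it, the crown rule (delete the crown elements and replace each \(P\cup\{i\}\) by \(P\)) is safe and leaves \(k\) unchanged.
\item Once no crown exists, the expansion lemma bounds the number of outside elements by \(d|\mathcal M|\).
\end{itemize}
As written, your proposal establishes the element bound only through its final appeal to the cited analyses. That is also all the paper does, so the self-contained sketch should either be corrected along these lines or dropped in favour of the citation.
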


In our algorithm, we first apply the kernelization algorithm by Lemma~\ref{lem:reduced-instance}.
We also handle the trivial cases as follows.
If \(\emptyset\in\mathcal S\), the instance is a NO-instance.
If \(\mathcal S=\emptyset\), it is a YES-instance.
If \(k=0\) and \(\mathcal S\neq\emptyset\), it is a NO-instance.
Finally, if \(\emptyset\notin\mathcal S\) and \(k\geq|U|\), the set
\(U\) is a feasible solution and thus the instance is YES-instance.
Hence, in the following sections, we can always assume that
\(\emptyset\notin\mathcal S\), \(\mathcal S\neq\emptyset\), and
\(1\leq k<|U|\).

\subsection{{Subhypergraph Kernels}}\label{sec:induced-kernels}

{A subhypergraph kernel deletes elements and whole sets from
the input. Formally, its output \((U',\mathcal S',k')\) satisfies
\[
U'\subseteq U,\qquad
\mathcal S'\subseteq\{A\in\mathcal S:A\subseteq U'\}.
\]
The output must be equivalent to the input and satisfy the usual
kernel-size bound.
An induced kernel is the special case in which
\(\mathcal S'=\{A\in\mathcal S:A\subseteq U'\}\)}

{The following lower bound concerns only
subhypergraph outputs. It is a consequence of existing lower bounds.
We include a short direct proof using the sparsification lower bound
of Dell and van Melkebeek \cite{DellVanMelkebeek2014}.}

\begin{proposition}\label{prop:induced-lower-bound}
{For every fixed integer \(d\geq3\) and every fixed
\(0<\varepsilon<d-1\), \(d\)-Hitting Set has no deterministic
kernel with \(O(k^{d-1-\varepsilon})\) elements whose output is a
subhypergraph of the input, unless
\(\mathsf{coNP}\subseteq\mathsf{NP/poly}\).}
\end{proposition}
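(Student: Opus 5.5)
Suppose a deterministic subhypergraph kernel \(\mathcal K\) outputs \((U',\mathcal S',k')\) with \(|U'|\le c\,k^{d-1-\varepsilon}\) and \(k'\le f(k)\). The plan is to convert it into a compression of \(d\)-Hitting Set of bit size \(O(k^{d-\varepsilon'})\) for some \(\varepsilon'>0\). The Dell--van Melkebeek theorem \cite{DellVanMelkebeek2014} excludes exactly such compressions unless \(\mathsf{coNP}\subseteq\mathsf{NP/poly}\).

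The key observation is that a subhypergraph output is determined by \(U'\) and \(\mathcal S'\), and \(\mathcal S'\) consists of sets of size at most \(d\) inside \(U'\). I will combine this with a count on the number of sets, obtained from the element bound. Step~1 is a sunflower argument applied to the output. Run the standard sunflower reduction on \((U',\mathcal S',k')\), but use only rules that delete whole sets: whenever a sunflower of more than \(k'\) sets with core \(C\) exists, do \emph{not} add \(C\). Doing so would break the subhypergraph property, so I would use a different route.

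Instead, bound the sets directly by their dimension. Every set in \(\mathcal S'\) has size at most \(d\), and no set of size \(d\) is needed beyond what is determined by the \((d-1)\)-subsets. Concretely, the number of sets is at most \(\sum_{i\le d}\binom{|U'|}{i}=O(|U'|^d)=O(k^{d(d-1-\varepsilon)})\). This is too weak, so encoding \(\mathcal S'\) naively fails, and the element bound must be exploited more cleverly.

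The plan I would commit to is therefore a standard composition against the OR-cross-composition / Dell--van Melkebeek lower bound, specialized to the Vertex Cover of \(d\)-uniform hypergraphs. Dell and van Melkebeek prove that \(d\)-Vertex Cover on \(n\)-vertex \(d\)-uniform hypergraphs has no compression of size \(O(n^{d-\varepsilon})\). Apply \(\mathcal K\) to such an instance with \(k\le n\). The output is a subhypergraph with \(|U'|=O(n^{d-1-\varepsilon})\) and at most \(|\mathcal S|\le n^d\) sets, each being an edge of the input. The output can be encoded by listing \(U'\) as a subset of \([n]\) and listing \(\mathcal S'\) as a subset of input edges.

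The main obstacle is that listing \(\mathcal S'\) costs up to \(|\mathcal S'|\log n\) bits. Each output set is a \(d\)-subset of \(U'\), so \(|\mathcal S'|\le |U'|^d\), which is not enough. The extra idea I would try first is to use \(\mathcal S'=\{A\in\mathcal S:A\subseteq U'\}\) in the induced case: then \(\mathcal S'\) is recoverable from the \emph{input} and \(U'\). That does not yield a compression by itself, however, since a compression must be decidable from the output alone.

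Here is the better plan. Note that a YES-instance needs only \(k\) elements, and pick \(n\) so that \(k=\Theta(n)\) in the hard instances. Then \(|U'|=O(n^{d-1-\varepsilon})\). A subhypergraph in which every vertex is essential has, by the sunflower-free argument applied inside \(U'\), at most \(O(k'^{d})\) relevant sets. This gives a size bound polynomial in \(k\), and I would expect to need a careful trade-off showing the total bit size becomes \(O(n^{d-\varepsilon'})\). That trade-off is the step I expect to be hardest, and the likely fix is choosing \(d\)-partite hard instances where every \(U'\)-vertex touches few edges.
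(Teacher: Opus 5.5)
There is a genuine gap. None of your routes produces a usable bound on \(|\mathcal S'|\), and the final step (``a careful trade-off'', ``\(d\)-partite hard instances where every \(U'\)-vertex touches few edges'') is never specified. The sunflower route cannot be rescued. Deleting one set from a sunflower with \(k'+2\) sets is a valid subhypergraph operation, but it only yields \(|\mathcal S'|=O_d((k')^{d})\). The hypothesis gives no useful bound on \(k'\) beyond capping it at \(|U'|\), which returns the \(O(|U'|^{d})\) count you already found too weak. Even if \(k'\le k=\Theta(n)\), you would get \(O(n^{d}\log n)\) bits, which does not beat the \(n^{d-\varepsilon}\) threshold. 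A sunflower count never sees the small number of elements. Moreover, in arbitrary \(d\)-uniform hard instances a subhypergraph on \(N\) vertices can retain \(\Theta(N^{d})\) input edges, so you cannot hope for a generic bound on \(|\mathcal S'|\) in terms of \(|U'|\) there.

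The missing idea is to plant the degree bound yourself.
\begin{itemize}
\item Start from Hitting Set with all sets of size exactly \(d-1\) on \(n\) elements. For these, Dell and van Melkebeek exclude compressions of \(O(n^{d-1-\delta})\) bits; this needs \(d-1\ge2\), which is where \(d\ge3\) enters.
\item Replace every set \(A\) by \(A\cup\{p_A\}\), where \(p_A\) is a fresh private element. This gives an equivalent \(d\)-Hitting Set instance with the same \(k\le n\), since a chosen \(p_A\) can be swapped for any element of \(A\).
\item In a subhypergraph output, every retained set \(A\cup\{p_A\}\) has \(p_A\in U'\). The \(p_A\) are distinct, so \(|\mathcal S'|\le|U'|\).
\item Renumber \(U'\) and cap \(k'\) at \(|U'|\). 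The whole output then takes \(O_d(|U'|\log|U'|)=O_d(k^{d-1-\varepsilon}\log k)\subseteq O(n^{d-1-\varepsilon/2})\) bits, contradicting the lower bound with \(\delta=\varepsilon/2\).
\end{itemize}
This is why the reduction starts from rank \(d-1\) rather than from \(d\)-uniform instances. Each set needs a spare slot for its private element, and the resulting threshold \(n^{d-1-\delta}\) is exactly what the element exponent \(d-1-\varepsilon\) is measured against. Your closing intuition about low-degree vertices points in this direction. However, you cannot control which vertices the kernel keeps in an arbitrary hard instance, so the degree-one vertices must be created by the reduction itself.
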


\begin{proof}
{The idea is to give each set its own distinct element, so that
retaining many sets requires retaining many elements.

Consider a Hitting Set instance \((U,\mathcal S,k)\), where
\(|U|=n\), every set has size exactly \(d-1\), and \(k\leq n\).
For each \(A\in\mathcal S\), add a new element \(p_A\) that occurs
only in this set, and replace \(A\) by \(A\cup\{p_A\}\).
This gives an equivalent \(d\)-Hitting Set instance with the same
parameter. Indeed, every original hitting set remains valid.
Conversely, replacing each selected \(p_A\) by any element of \(A\)
gives an original hitting set of no greater size.

Suppose that the kernel in the statement exists, and apply it to
this expanded instance. Let \(N=O(k^{d-1-\varepsilon})\) be the
number of output elements.
Since the output is a subhypergraph, every retained set still
contains its distinct element \(p_A\). There are therefore at most
\(N\) output sets.
After numbering the output elements from \(1\) to \(N\), each set
can be recorded using \(O_d(\log N)\) bits.
The output parameter can be capped at \(N\), so the entire output
requires only \(O_d(N\log N)\) bits.

We have thus compressed the original instance in polynomial time to
\[
O_d\!\left(k^{d-1-\varepsilon}\log k\right)
\subseteq
O_d\!\left(n^{d-1-\varepsilon/2}\right)
\]
bits, where we use \(k\leq n\).
However, Dell and van Melkebeek
\cite{DellVanMelkebeek2014} show that Hitting Set
instances on \(n\) elements, with every set of size exactly \(d-1\),
cannot be compressed in deterministic polynomial time into
\(O(n^{d-1-\delta})\) bits while preserving the YES/NO answer,
for any fixed \(\delta>0\), unless
\(\mathsf{coNP}\subseteq\mathsf{NP/poly}\).
Taking \(\delta=\varepsilon/2\) gives the contradiction.}
\end{proof}

{The kernels of Abu-Khzam \cite{AbuKhzam2010} and Liu and Xiao \cite{LiuXiao2025} are not subhypergraph kernels, since their reductions may replace input sets by smaller sets that are not present in the input.}

\subsection{CNF Formulas and Cardinality Constraints}

A literal is a Boolean variable or its negation. A clause is a disjunction of literals. A formula is in conjunctive normal form(CNF) if it is a conjunction of clauses.
The width of a clause is its number of literals.

In our construction, we have cardinality constraints for some of the variables.
We apply a standard parallel binary counter~\cite{Sinz2005} to convert these conditions into standard CNF formulas.
This conversion preserves the intended assignments, has linear size, and produces only clauses of
width at most three.


\begin{lemma}[Linear-size cardinality encoding]\label{lem:counters}
Let \(x_1,\ldots,x_N\) be Boolean variables, let \(b\) be a
nonnegative integer, and let \(\bowtie\in\{=,\leq\}\). The constraint
\[
\sum_{i=1}^{N}x_i\mathrel{\bowtie}b
\]
can be encoded in polynomial time by a CNF formula such that an
assignment to \(x_1,\ldots,x_N\) extends to a satisfying assignment of
the formula if and only if it satisfies the constraint. The encoding
has \(O(N+1)\) auxiliary variables and \(O(N+1)\) clauses, each of
width at most three.
\end{lemma}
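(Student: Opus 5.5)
The plan is to build a sequential (unary) counter in the style of Sinz rather than the parallel binary counter. For \(N\) inputs and threshold \(b\) a unary counter would cost \(O(Nb)\), so the threshold has to be handled in binary to reach the linear bound. I will first dispose of trivial cases. If \(\bowtie\) is \(\leq\) and \(b\geq N\), the empty formula works. If \(b>N\) and \(\bowtie\) is \(=\), the formula consisting of one clause built from a fresh variable \(z\), namely \(z\) and \(\neg z\), is unsatisfiable. Otherwise \(b\leq N\), so \(b\) has \(L=O(\log(N+1))\) bits.

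I will realise the sum \(\sum_i x_i\) as a binary adder circuit and Tseitin-encode it.

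\textbf{The adder.} Arrange the inputs as the leaves of a balanced binary tree. At each internal node with children carrying binary numbers of bit-lengths \(\ell_1,\ell_2\), I introduce a ripple-carry adder producing a number of length \(\max(\ell_1,\ell_2)+1\).

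\textbf{Clause width.} Each full-adder bit is defined by the equations \(s=a\oplus c\oplus e\) and \(c'=\mathrm{maj}(a,c,e)\). A ternary XOR needs clauses of width four, so I split it into two binary XORs through an auxiliary variable. Each binary XOR and each majority (written as \(c'\leftrightarrow (a\wedge c)\vee(a\wedge e)\vee(c\wedge e)\), or again split through auxiliaries) then gives clauses of width at most three. Every auxiliary variable is fully defined by Tseitin equivalences, so any assignment to the \(x_i\) extends uniquely. Consequently the output bits equal the binary value of \(\sum_i x_i\).

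\textbf{Size bound.} A node at height \(h\) handles numbers of \(O(h)\) bits, and there are about \(N/2^h\) such nodes. The total size is therefore \(\sum_h O(h\,N/2^h)=O(N)\). Adding \(O(1)\) for the edge cases gives \(O(N+1)\) auxiliary variables and clauses.

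\textbf{Comparison with \(b\).} Let \(y_{L'},\dots,y_0\) be the output bits, most significant first.

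For \(=\), I add the unit clause \(y_j\) or \(\neg y_j\) according to bit \(j\) of \(b\), padding with zero bits as needed. This uses \(O(\log N)\) clauses.

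For \(\leq\), I use the standard lexicographic comparison. For every position \(j\) where \(b_j=0\), add the clause
\[
\neg y_j\vee\bigvee_{i>j,\;b_i=1} y_i \vee \bigvee_{i>j,\;b_i=0}\neg y_i\,.
\]
Read as an implication, this says that if the output agrees with \(b\) above \(j\), then \(y_j=0\).

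This last step is where I expect the main obstacle: these clauses have width up to \(L\), not at most three. I will fix this with prefix-equality auxiliaries \(q_j\leftrightarrow q_{j+1}\wedge(y_j\leftrightarrow b_j)\). Since \(b_j\) is a constant, each \(q_j\) is defined by width-three clauses. The comparison clauses then become \(\neg q_{j+1}\vee\neg y_j\) for every \(j\) with \(b_j=0\). This costs \(O(\log N)\) extra variables and clauses.

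Correctness follows because the auxiliaries are functionally determined by the \(x_i\). The formula is therefore satisfiable by an extension exactly when the computed sum satisfies the comparison, and the whole construction is clearly polynomial time.
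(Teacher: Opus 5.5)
Your proposal is correct and is essentially the paper's proof: a balanced tree of binary adders, Tseitin-encoded into width-three clauses (with the ternary XOR split as you describe), the same geometric sum $\sum_h O(hN/2^h)=O(N)$, and an $O(\log(N+1))$-size comparison with $b$, where the paper only says ``comparison gates plus a unit clause'' and you spell this out with unit clauses for $=$ and the prefix-equality chain $q_j$ for $\leq$. There are two small slips that do not affect the result: your opening sentence announces a sequential unary counter although you actually build the parallel binary one, and in the displayed lexicographic clause the prefix literals have the wrong signs (it should contain $\neg y_i$ for $b_i=1$ and $y_i$ for $b_i=0$), but the clauses $\neg q_{j+1}\vee\neg y_j$ that you actually use are correct once the topmost prefix variable is fixed to true.
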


\begin{proof}
The cases \(N=0\) and \(b>N\) can be represented directly by a
constant true or false formula. Otherwise, a balanced tree of binary
addition circuits computes the sum of \(x_1,\ldots,x_N\). At level
\(j\), there are \(O(N/2^{j+1})\) additions on numbers represented by
\(O(j+1)\) bits. Hence the entire tree uses
\[
O\!\left(
\sum_{j\geq0}\frac{N(j+1)}{2^{j+1}}
\right)
=O(N)
\]
gates. Comparing the resulting sum with the fixed value \(b\) requires
only \(O(\log(N+1))\) additional gates.

Express the circuit using binary AND, OR, XOR, and NOT gates, and apply
the Tseitin encoding to each gate. Each gate introduces constantly many
variables and clauses of width at most three. A unit clause requires the
comparison output to be true. The resulting CNF therefore has the
claimed satisfying assignments and size.
\end{proof}

\section{Hash-Based Kernel}\label{sec:color-framework}

The starting point of our kernelization is a simple observation. Even
after applying \Cref{lem:reduced-instance}, the universe may still have
\(O_d(k^{d-1})\) elements, while the solution we are looking for contains
at most \(k\) of them. A direct Boolean encoding would assign a variable
to every element and would therefore remember far more information than
a solution can use. Our approach is to encode the selected elements
rather than the entire universe.

We first use a maximal packing to identify a small part of the universe
that meets every input set. These elements can be represented directly.
For the remaining elements, we use two levels of hashing and a shared
table. The first hash spreads the selected elements among buckets, and
the second distinguishes the selected elements within each bucket. The
two hash values locate a table entry, while a short address identifies
the element stored there. Many elements of the universe can therefore
share the same Boolean representation, since only the elements appearing
together in a small solution need to be distinguished.

The rest of this chapter turns this idea into a kernel. We build a CNF
formula that records the selected table entries, enforces the budget,
and ensures that every input set is hit. We then transform the formula
back into an instance of \HS{}. 

\subsection{The Packing}

Let \((U,\mathcal S,k)\) denote the reduced instance obtained from
\Cref{lem:reduced-instance}. Then
\(n=O_d(k^{d-1})\) and \(m=O_d(k^d)\).

A packing is a subfamily of pairwise disjoint input sets. We greedily
construct an inclusion-maximal packing
\(\mathcal P\subseteq\mathcal S\). If \(|\mathcal P|>k\), we return a
fixed NO-instance, since a hitting set must use a different element to
hit each member of \(\mathcal P\). Thus, we may assume that
\(|\mathcal P|\leq k\).

Define
\[
X=\bigcup_{A\in\mathcal P}A,
\qquad
W=U\setminus X.
\]
Since every set has size at most \(d\),
\[
|X|\leq d|\mathcal P|\leq dk.
\]

Maximality of \(\mathcal P\) implies that every input set intersects
\(X\). Otherwise, that set could be added to \(\mathcal P\).
Consequently,
\begin{equation}\label{eq:outside-X}
|A\cap W|\leq d-1
\qquad
\text{for every }A\in\mathcal S.
\end{equation}

Since \(|X|\leq dk\), the elements of \(X\) can be represented directly.
The elements of \(W\) will be represented through the shared table.
Equation~\eqref{eq:outside-X} also ensures that each input set contains
at most \(d-1\) address-encoded elements.

\subsection{Two-Level Hashing}

We now formalize the hashing property used by the encoding. For this
purpose, \(T\subseteq W\) denotes the part of a possible solution that
lies in \(W\). A hash function \(h\colon W\to[q]\) separates \(T\) if it
is injective on \(T\).

Let
\[
\widetilde h\colon W\to[q_1]
\]
be the first-level function, and let
\[
\mathcal H_2=(g_1,\ldots,g_{t_2}),
\qquad
g_r\colon W\to[q_2],
\]
be the second-level family. A map
\[
\sigma\colon[q_1]\to[t_2]
\]
specifies which second-level function is used in each first-level
bucket.

\begin{definition}[Two-level separation]
\label{def:two-level-separation}
Let \(1\leq b\leq k\). The pair
\((\widetilde h,\mathcal H_2)\) separates
\(T\subseteq W\) in two levels with load \(b\) if
\[
|T\cap\widetilde h^{-1}(c)|\leq b
\qquad
\text{for every }c\in[q_1],
\]
and there exists a map \(\sigma\colon[q_1]\to[t_2]\) such that
\(g_{\sigma(c)}\) is injective on
\(T\cap\widetilde h^{-1}(c)\) for every \(c\in[q_1]\).
\end{definition}

Here \(q_1\) and \(q_2\) are the two hash-range sizes, \(t_2\) is the
number of available second-level functions, and \(b\) bounds the number
of elements of \(T\) placed in one first-level bucket.

For a map \(\sigma\colon[q_1]\to[t_2]\), define
\begin{equation}\label{eq:two-level-hash}
H_\sigma(v)
=
\bigl(
\widetilde h(v),
g_{\sigma(\widetilde h(v))}(v)
\bigr).
\end{equation}
If \((\widetilde h,\mathcal H_2)\) separates \(T\), then
\(H_\sigma\) is injective on \(T\) for a suitable \(\sigma\).
Elements in different first-level buckets have different first
coordinates, while elements in the same bucket are separated by the
second-level function selected for that bucket.

We use the following pointwise probability guarantee.

\begin{definition}[Randomized two-level separating distribution]
\label{def:random-separating-list}
Let \(b\) be a positive integer and \(p\in[0,1]\). A distribution
\(\mathcal D\) over pairs
\[
\mathcal F=(\widetilde h,\mathcal H_2)
\]
is a \((k,b,p)\)-two-level separating distribution if, for every fixed
\(T\subseteq W\) with \(|T|\leq k\),
\[
\Pr_{\mathcal F\sim\mathcal D}
\left[
\mathcal F\text{ separates }T\text{ in two levels with load }b
\right]
\geq p.
\]
\end{definition}

The probability is evaluated for a fixed possible solution \(T\). This
is sufficient for a one-sided kernel. A YES-instance only needs one
solution to be represented with probability at least \(p\), whereas
every satisfying output can be decoded for every sampled pair of hash
objects. The only random step is the sampling of
\((\widetilde h,\mathcal H_2)\). The map \(\sigma\) is not sampled
separately; it is selected by the Boolean encoding.

For the remainder of the construction, fix a concrete pair
\[
\mathcal F=(\widetilde h,\mathcal H_2).
\]
The encoding is defined for every such pair. If the pair separates the
elements of a size-\(k\) solution that lie in \(W\), then the encoding
can represent that solution.

\subsection{Representing Elements}

We represent the elements of \(X\) directly. The elements of \(W\) are
represented by entries of a shared table indexed by the two hash values.

Assign every \(v\in W\) a distinct binary address
\[
\gamma(v)\in\{0,1\}^{\ell},
\qquad
\ell=\left\lceil\log_2(|W|+1)\right\rceil.
\]
Once \(\sigma\) has been selected, the two hash values assign \(v\) to
the table entry
\[
H_\sigma(v)\in[q_1]\times[q_2].
\]
Each entry \((c,j)\) has an activation variable \(a_{c,j}\) and
\(\ell\) address variables
\[
z_{c,j,1},\ldots,z_{c,j,\ell}.
\]

The hash functions only need to separate the selected elements, not all
elements of \(W\). Several elements of \(W\) may therefore share the
same table entry. The address stored in that entry specifies which of
these elements is represented.

For every \(x\in X\), introduce a variable \(s_x\), which is true
exactly when \(x\) is selected directly. For every
\(c\in[q_1]\) and \(r\in[t_2]\), introduce a variable \(w_{c,r}\),
which records whether bucket \(c\) uses \(g_r\) at the second level.
The primary variables are summarized in
\Cref{tab:family-primary-variables}.

\begin{table}[H]
\centering
\begin{tabular}{lll}
\toprule
variable & indices & meaning when true\\
\midrule
\(s_x\) & \(x\in X\) &
element \(x\) is selected directly\\
\(w_{c,r}\) &
\((c,r)\in[q_1]\times[t_2]\) &
bucket \(c\) uses the second-level function \(g_r\)\\
\(a_{c,j}\) &
\((c,j)\in[q_1]\times[q_2]\) &
table entry \((c,j)\) is active\\
\(\{z_{c,j,\beta}\}_{\beta\in[\ell]}\) &
\((c,j)\in[q_1]\times[q_2]\) &
the address stored in entry \((c,j)\)\\
\bottomrule
\end{tabular}
\caption{Primary variables used in the encoding.}
\label{tab:family-primary-variables}
\end{table}

The variables \(a_{c,j}\) and \(z_{c,j,\beta}\) have no
second-level function index and are shared by all functions in
\(\mathcal H_2\). The map \(\sigma\) determines how this common table is
interpreted.

For a map \(\sigma\colon[q_1]\to[t_2]\) and \(v\in W\), let
\(D_{\sigma,v}\) express that the table entry \(H_\sigma(v)\) is active
and stores the address \(\gamma(v)\). Writing
\((c,j)=H_\sigma(v)\), define
\begin{equation}\label{eq:decoded-element}
D_{\sigma,v}
=
a_{c,j}
\wedge
\bigwedge_{\substack{\beta\in[\ell]\\ \gamma(v)_\beta=1}}
z_{c,j,\beta}
\wedge
\bigwedge_{\substack{\beta\in[\ell]\\ \gamma(v)_\beta=0}}
\neg z_{c,j,\beta}.
\end{equation}
The expression \(D_{\sigma,v}\) is an abbreviation for one activation
literal and \(\ell\) address literals, not a new variable.

\begin{lemma}\label{lem:unique-decoding}
Fix a map \(\sigma\colon[q_1]\to[t_2]\) and an assignment to the
variables. For every table entry
\((c,j)\in[q_1]\times[q_2]\), at most one element \(v\in W\) with
\(H_\sigma(v)=(c,j)\) satisfies \(D_{\sigma,v}\). Consequently,
\[
\bigl|\{v\in W:D_{\sigma,v}=1\}\bigr|
\leq
\sum_{c=1}^{q_1}\sum_{j=1}^{q_2}a_{c,j}.
\]
\end{lemma}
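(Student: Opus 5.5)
The plan is to prove the first claim directly from the injectivity of the address map \(\gamma\), and then deduce the counting bound by summing over table entries.

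\textbf{Uniqueness per entry.} Fix \(\sigma\), an assignment, and an entry \((c,j)\). Suppose \(v,v'\in W\) both satisfy \(H_\sigma(v)=H_\sigma(v')=(c,j)\) and \(D_{\sigma,v}=D_{\sigma,v'}=1\). By \eqref{eq:decoded-element}, both expressions refer to the same address variables \(z_{c,j,1},\ldots,z_{c,j,\ell}\), because the entry is determined by \(H_\sigma\). The conjunction \(D_{\sigma,v}\) forces \(z_{c,j,\beta}=\gamma(v)_\beta\) for every \(\beta\in[\ell]\), and \(D_{\sigma,v'}\) forces \(z_{c,j,\beta}=\gamma(v')_\beta\). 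Hence \(\gamma(v)=\gamma(v')\). Since \(\gamma\) assigns distinct addresses to distinct elements of \(W\), we get \(v=v'\).

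\textbf{Counting.} Partition \(\{v\in W:D_{\sigma,v}=1\}\) according to the value of \(H_\sigma(v)\in[q_1]\times[q_2]\). Each decoded \(v\) satisfies the activation literal, so \(a_{H_\sigma(v)}=1\). Therefore the class belonging to an entry \((c,j)\) is empty whenever \(a_{c,j}=0\). By the first part, that class has at most one element whenever \(a_{c,j}=1\). In both cases the class has size at most \(a_{c,j}\), viewing \(a_{c,j}\) as an integer in \(\{0,1\}\). Summing over all \((c,j)\) gives the stated inequality.

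There is no real obstacle here. The only point needing care is that the address variables are indexed by the entry \((c,j)=H_\sigma(v)\) and not by \(v\), so two elements hashed to the same entry are checked against the same variables.
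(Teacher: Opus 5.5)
Your proof is correct and follows the same route as the paper: since the addresses $\gamma(v)$ are distinct, the fixed address bits of entry $(c,j)$ can match at most one element hashed there, and the activation literal in $D_{\sigma,v}$ forces that entry to be active. Summing over entries then gives the bound; you simply write out explicitly the partition-by-entry counting step that the paper leaves implicit.
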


\begin{proof}
Elements mapped to the same table entry have distinct addresses. A
fixed assignment to the address bits of that entry can therefore match
at most one of them. An element can satisfy \(D_{\sigma,v}\) only when
its entry is active.
\end{proof}

\subsection{The CNF Encoding}\label{sec:cnf-encoding}

We now constrain the variables so that they represent at most \(k\)
elements and intersect every member of \(\mathcal S\).

For each first-level bucket, exactly one second-level function is
selected. We require
\begin{equation}\label{eq:second-level-choice}
\sum_{r=1}^{t_2}w_{c,r}=1
\qquad
\text{for every }c\in[q_1].
\end{equation}
These choices determine a map
\(\sigma\colon[q_1]\to[t_2]\). We also require
\begin{equation}\label{eq:shared-budget}
\sum_{x\in X}s_x+
\sum_{c=1}^{q_1}\sum_{j=1}^{q_2}a_{c,j}
\leq k.
\end{equation}
The first sum counts the elements selected directly from \(X\).
By \Cref{lem:unique-decoding}, the second sum upper-bounds the number of
elements decoded from \(W\). By \Cref{lem:counters}, the constraints in
\eqref{eq:second-level-choice} and \eqref{eq:shared-budget} have
linear-size CNF encodings whose clauses have width at most three.

For a fixed map \(\sigma\) and an input set \(A\in\mathcal S\), the
natural hitting condition is
\[
\bigvee_{x\in A\cap X}s_x
\ \vee\
\bigvee_{v\in A\cap W}D_{\sigma,v}.
\]
It has exactly \(|A|\leq d\) terms. The remaining difficulty is that
\(\sigma\) has \(t_2^{q_1}\) possible values and cannot be enumerated in
full. Each input set, however, depends only on the second-level choices
made for the first-level buckets that it meets.

We introduce auxiliary variables so that a table test becomes
automatically true when its second-level function is not selected,
while the resulting clauses still have width at most \(d\).
For every \((c,r,j)\in[q_1]\times[t_2]\times[q_2]\), introduce
\begin{equation}\label{eq:conditional-active}
p^{\mathrm{act}}_{c,r,j}
\leftrightarrow
(\neg w_{c,r}\vee a_{c,j}).
\end{equation}
For \(\delta\in\{0,1\}\), write
\[
z_{c,j,\beta}^{[\delta]}
=
\begin{cases}
z_{c,j,\beta},&\delta=1,\\
\neg z_{c,j,\beta},&\delta=0.
\end{cases}
\]
For every \(\beta\in[\ell]\), introduce
\begin{equation}\label{eq:conditional-bit}
p^\delta_{c,r,j,\beta}
\leftrightarrow
\bigl(\neg w_{c,r}\vee z_{c,j,\beta}^{[\delta]}\bigr).
\end{equation}
Each equivalence has a constant-size CNF encoding of width at most
three.

For \(r\in[t_2]\) and \(v\in W\), write
\[
c=\widetilde h(v),
\qquad
j=g_r(v),
\]
and define
\begin{equation}\label{eq:conditional-match}
\widehat D_{r,v}
=
p^{\mathrm{act}}_{c,r,j}
\wedge
\bigwedge_{\beta=1}^{\ell}
p^{\gamma(v)_\beta}_{c,r,j,\beta}.
\end{equation}
If \(w_{c,r}=0\), this conjunction is true. If \(w_{c,r}=1\), it has
the same value as the test that entry \((c,j)\) is active and stores
\(\gamma(v)\).

For \(A\in\mathcal S\), let
\[
C_A=
\{\widetilde h(v):v\in A\cap W\}.
\]
Thus \(C_A\) is the set of first-level buckets met by \(A\cap W\), and
\eqref{eq:outside-X} gives
\[
|C_A|\leq |A\cap W|\leq d-1.
\]
A map
\[
\rho\colon C_A\to[t_2]
\]
records the second-level choices relevant to \(A\). For every such map,
define
\begin{equation}\label{eq:local-hitting-condition}
F_{A,\rho}
=
\bigvee_{x\in A\cap X}s_x
\ \vee\
\bigvee_{v\in A\cap W}
\widehat D_{\rho(\widetilde h(v)),v}.
\end{equation}

Let \(\sigma\) be the map selected by
\eqref{eq:second-level-choice}. If \(\rho\) disagrees with \(\sigma\)
on a bucket met by \(A\), the corresponding conditional term in
\(F_{A,\rho}\) is automatically true. For
\(\rho=\sigma|_{C_A}\), the expression requires the represented set to
contain an element of \(A\). It is therefore sufficient to impose
\(F_{A,\rho}\) for every local map \(\rho\).

Let
\[
L=1+\ell,
\]
the number of literals in one address test. Every conditional term in
\eqref{eq:local-hitting-condition} is a conjunction of \(L\) literals,
while every direct-selection term is one literal. Applying the
distributive law converts \(F_{A,\rho}\) into an equivalent CNF formula.
Every resulting clause has width at most
\(|A|\leq d\), and the number of clauses is at most
\begin{equation}\label{eq:clauses-per-hash-set}
L^{|A\cap W|}
\leq
L^{d-1}.
\end{equation}

Let \(\Phi_{\mathcal F}\) be the conjunction of the CNF encodings of
\eqref{eq:second-level-choice} and \eqref{eq:shared-budget}, the clauses
defining the conditional variables in
\eqref{eq:conditional-active} and \eqref{eq:conditional-bit}, and the
CNF expansions of \(F_{A,\rho}\) for every
\(A\in\mathcal S\) and every map \(\rho\colon C_A\to[t_2]\).
Every clause of \(\Phi_{\mathcal F}\) has width at most \(d\).

\subsection{Correctness}

\begin{lemma}
\label{lem:representation-correctness}
The formula \(\Phi_{\mathcal F}\) is satisfiable if and only if there
exist a hitting set \(Z\) of size at most \(k\) and a map
\(\sigma\colon[q_1]\to[t_2]\) such that \(H_\sigma\) is injective on
\(Z\cap W\).
\end{lemma}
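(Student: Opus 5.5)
We prove the two directions separately; both reduce to the semantics of the conditional variables. The key fact we establish first is this: under \eqref{eq:second-level-choice}, with \(\sigma\) the selected map, every satisfying assignment makes \(\widehat D_{r,v}\) true whenever \(r\neq\sigma(\widetilde h(v))\). When \(r=\sigma(\widetilde h(v))\), it makes \(\widehat D_{r,v}\) equal to \(D_{\sigma,v}\). Both claims follow from \eqref{eq:conditional-active}, \eqref{eq:conditional-bit} and \eqref{eq:conditional-match}, because \(w_{c,r}=1\) exactly when \(r=\sigma(c)\).

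\emph{Soundness (satisfiable \(\Rightarrow\) hitting set).} Take a satisfying assignment and let \(\sigma\) be the map determined by the \(w\)-variables via \eqref{eq:second-level-choice}. Define
\[
Z=\{x\in X:s_x=1\}\cup\{v\in W:D_{\sigma,v}=1\}.
\]
By \Cref{lem:unique-decoding} and \eqref{eq:shared-budget}, we get \(|Z|\leq k\).

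\begin{itemize}
\item \emph{Injectivity.} \(H_\sigma\) is injective on \(Z\cap W\), again by \Cref{lem:unique-decoding}: at most one decoded element occupies each table entry.
\item \emph{Hitting.} For \(A\in\mathcal S\), take \(\rho=\sigma|_{C_A}\). The clauses of \(F_{A,\rho}\) are satisfied, and by the key fact every conditional term \(\widehat D_{\rho(\widetilde h(v)),v}\) equals \(D_{\sigma,v}\). Hence some \(s_x\) with \(x\in A\cap X\) or some \(D_{\sigma,v}\) with \(v\in A\cap W\) is true, so \(Z\cap A\neq\varnothing\).
\end{itemize}

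\emph{Completeness (hitting set \(\Rightarrow\) satisfiable).} Given \(Z\) and \(\sigma\) with \(H_\sigma\) injective on \(Z\cap W\), build the assignment as follows.

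\begin{itemize}
\item Set \(s_x=1\) iff \(x\in Z\cap X\).
\item Set \(w_{c,r}=1\) iff \(r=\sigma(c)\).
\item For each \(v\in Z\cap W\) with \(H_\sigma(v)=(c,j)\), set \(a_{c,j}=1\) and write \(\gamma(v)\) into \(z_{c,j,\cdot}\). This is well defined by injectivity.
\item Set every other \(a_{c,j}=0\) and every other \(z\) arbitrarily.
\item Assign the \(p\)-variables their forced values from \eqref{eq:conditional-active} and \eqref{eq:conditional-bit}.
\item Extend to the counter auxiliaries using \Cref{lem:counters}.
\end{itemize}

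Constraint \eqref{eq:second-level-choice} holds by construction. The budget \eqref{eq:shared-budget} holds because the number of active entries equals \(|Z\cap W|\), so the left-hand side equals \(|Z|\leq k\).

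It remains to check \(F_{A,\rho}\) for every local map \(\rho\colon C_A\to[t_2]\), which is the step needing the most care.

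\begin{itemize}
\item If \(\rho\neq\sigma|_{C_A}\), pick \(v\in A\cap W\) with \(\rho(\widetilde h(v))\neq\sigma(\widetilde h(v))\). By the key fact its conditional term is true.
\item If \(\rho=\sigma|_{C_A}\), note that \(Z\) hits \(A\). Either some \(s_x\) with \(x\in A\cap X\) is true, or some \(v\in Z\cap A\cap W\) has \(D_{\sigma,v}\) true by construction, and then the term \(\widehat D_{\sigma(\widetilde h(v)),v}\) is true.
\end{itemize}

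Since the CNF expansion of \(F_{A,\rho}\) is equivalent to \(F_{A,\rho}\), all of its clauses are satisfied. The main obstacle is keeping the bookkeeping straight between \(\rho\), \(\sigma\) and the conditional variables. Nothing beyond the key fact is required.
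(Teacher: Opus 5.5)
Your proposal is correct and follows essentially the same argument as the paper: the same assignment in the direction from a hitting set to a satisfying assignment, the same case split on whether \(\rho=\sigma|_{C_A}\), and the same decoding via \Cref{lem:unique-decoding} with \(\rho=\sigma|_{C_A}\) in the other direction. Your ``key fact'' makes explicit the remark the paper records right after \eqref{eq:conditional-match}; it uses only the defining equivalences and the \(w\)-values, not full satisfiability, so it also applies to the assignment you construct before showing that assignment is satisfying.
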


\begin{proof}
\emph{Forward direction.}
Suppose that such a hitting set \(Z\) and map \(\sigma\) exist. For every
\(c\in[q_1]\), set \(w_{c,\sigma(c)}=1\) and set the remaining
variables \(w_{c,r}\) to zero. Set \(s_x=1\) exactly for
\(x\in Z\cap X\).

For every \(v\in Z\cap W\), activate the table entry \(H_\sigma(v)\)
and store the address \(\gamma(v)\) in that entry. Since \(H_\sigma\)
is injective on \(Z\cap W\), these assignments do not conflict. Set all
remaining activation variables to zero; address bits in inactive
entries may be assigned arbitrarily. Exactly \(|Z\cap W|\) table
entries are active, and therefore
\[
\sum_{x\in X}s_x+
\sum_{c=1}^{q_1}\sum_{j=1}^{q_2}a_{c,j}
=
|Z\cap X|+|Z\cap W|
=
|Z|
\leq k.
\]

Assign the conditional variables according to their defining
equivalences. By \Cref{lem:counters}, the auxiliary variables in the
CNF encodings of \eqref{eq:second-level-choice} and
\eqref{eq:shared-budget} can be assigned so that these encodings are
satisfied.

It remains to verify the hitting constraints. Fix
\(A\in\mathcal S\) and a map \(\rho\colon C_A\to[t_2]\). If
\(\rho\neq\sigma|_{C_A}\), choose a bucket \(c\in C_A\) on which they
differ and an element \(v\in A\cap W\) with
\(\widetilde h(v)=c\). Since \(w_{c,\rho(c)}=0\), the conditional term
\(\widehat D_{\rho(c),v}\) is true.

Suppose now that \(\rho=\sigma|_{C_A}\). Since \(Z\) hits \(A\), either
some \(x\in A\cap Z\cap X\) makes \(s_x\) true, or some
\(v\in A\cap Z\cap W\) makes
\(\widehat D_{\sigma(\widetilde h(v)),v}\) true. Thus every
\(F_{A,\rho}\), and hence every clause in its equivalent CNF expansion,
is satisfied. It follows that \(\Phi_{\mathcal F}\) is satisfiable.

\emph{Reverse direction.}
Suppose that \(\Phi_{\mathcal F}\) is satisfiable.
Equation~\eqref{eq:second-level-choice} selects a unique index
\(\sigma(c)\in[t_2]\) for every \(c\in[q_1]\). Define
\[
Z_X=\{x\in X:s_x=1\},
\qquad
Z_W=\{v\in W:D_{\sigma,v}=1\}.
\]

By \Cref{lem:unique-decoding}, no two elements of \(Z_W\) are represented
by the same table entry. Hence \(H_\sigma\) is injective on \(Z_W\), and
\[
|Z_W|
\leq
\sum_{c=1}^{q_1}\sum_{j=1}^{q_2}a_{c,j}.
\]
Since \(X\cap W=\varnothing\), equation~\eqref{eq:shared-budget} gives
\[
|Z_X\cup Z_W|
=
|Z_X|+|Z_W|
\leq k.
\]

Fix \(A\in\mathcal S\) and take
\(\rho=\sigma|_{C_A}\). For this local map, every conditional term in
\(F_{A,\rho}\) has the same value as the corresponding
\(D_{\sigma,v}\). The CNF expansion of \(F_{A,\rho}\) is satisfied and
is equivalent to \(F_{A,\rho}\). Consequently, either \(s_x=1\) for
some \(x\in A\cap X\), or \(D_{\sigma,v}=1\) for some
\(v\in A\cap W\). Thus \(Z_X\cup Z_W\) hits every member of
\(\mathcal S\), has size at most \(k\), and is separated by
\(H_\sigma\).
\end{proof}

\subsection{Reducing Back to \(d\)-Hitting Set}

The formula \(\Phi_{\mathcal F}\) has clause width at most \(d\). We
convert it to an instance of \HS{} using the variable-pair reduction for
bounded-width CNF \cite{CyganEtAl2015}. The following lemma records the
complete interface of this reduction.

\begin{lemma}\label{lem:variable-pair}
Let \(\Psi\) be a CNF formula with \(N\) variables and \(M\) clauses,
each of width at most \(d\), where \(d\geq3\). In time linear in the
encoding length of \(\Psi\), one can construct an equivalent rank-\(d\)
Hitting Set instance with exactly \(2N\) elements, at most \(N+M\) sets,
and budget \(N\).
\end{lemma}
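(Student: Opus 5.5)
The construction is the standard literal-element reduction. For every variable \(y\) of \(\Psi\) I create two elements \(e_y^{+}\) and \(e_y^{-}\), representing the literals \(y\) and \(\neg y\), and the two-element set \(\{e_y^{+},e_y^{-}\}\). For every clause \(C\) I add the set \(S_C\) consisting of the elements of the literals occurring in \(C\), so \(|S_C|\leq d\). The budget is \(N\). Before this I normalize \(\Psi\) in linear time:
\begin{itemize}
\item a clause containing both \(y\) and \(\neg y\) is tautological and is dropped;
\item repeated literals within a clause are merged, so \(S_C\) has exactly as many elements as \(C\) has distinct literals;
\item if \(\Psi\) contains the empty clause, I output a fixed NO-instance of the required shape instead.
\end{itemize}
This gives exactly \(2N\) elements and at most \(N+M\) sets, and the rank is at most \(\max(2,d)=d\), using \(d\geq 3\geq 2\). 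The sets are listed by one pass over \(\Psi\), so the running time is linear.

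For equivalence, let \(\alpha\) be a satisfying assignment. Take \(Z=\{e_y^{+}:\alpha(y)=1\}\cup\{e_y^{-}:\alpha(y)=0\}\). Then \(|Z|=N\), \(Z\) hits every pair set, and it hits each \(S_C\) because some literal of \(C\) is true under \(\alpha\).

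Conversely, let \(Z\) be a hitting set with \(|Z|\leq N\). The \(N\) pair sets are pairwise disjoint and each must be hit, so \(Z\) contains exactly one element of each pair. Set \(\alpha(y)=1\) if and only if \(e_y^{+}\in Z\); this is well defined by the previous sentence. Every \(S_C\) is hit by some element of \(Z\), and that element is a literal of \(C\) that \(\alpha\) makes true. Hence \(\alpha\) satisfies \(\Psi\).

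The only delicate point is the degenerate cases, chiefly the empty clause, which would become an empty set in the output. The fixed NO-instance must itself respect the size bounds. If \(N\geq 1\), I use the sets \(\{e_y^{+}\}\), \(\{e_y^{-}\}\) and \(\{e_{y'}^{+},e_{y'}^{-}\}\) for every other variable \(y'\), with budget \(N\). This is \(N+1\leq N+M\) sets, since \(M\geq 1\) when an empty clause exists. Hitting it needs \(N+1\) elements, so it is a NO-instance. If \(N=0\), the empty set itself is allowed, because the paper's definition does not forbid \(\varnothing\in\mathcal S\), and an instance containing \(\varnothing\) is NO. I expect this bookkeeping, and not the equivalence argument, to be the main thing to get right.
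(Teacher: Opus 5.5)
Your proof is correct and is essentially the paper's argument: you use the same literal-pair construction, choose one element per pair from a satisfying assignment, and in the converse use the same disjointness argument that forces exactly one element per pair. Your normalization and special NO-instance are harmless but unnecessary, because the definition permits \(\varnothing\in\mathcal S\) (an empty clause simply becomes an empty set, making both sides NO), and tautological or repeated literals never break the equivalence.
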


\begin{proof}
For every Boolean variable \(u\), introduce two elements
\(u^{(0)}\) and \(u^{(1)}\), and add the set
\[
\{u^{(0)},u^{(1)}\}.
\]
For every clause, add a set obtained by replacing each positive literal
\(u\) with \(u^{(1)}\) and each negative literal \(\neg u\) with
\(u^{(0)}\).

Given a satisfying assignment, choose \(u^{(1)}\) when \(u\) is true
and \(u^{(0)}\) when \(u\) is false. This selects one element from every
variable pair and hits every clause set.

Conversely, the \(N\) variable-pair sets are pairwise disjoint. A
hitting set of size at most \(N\) must therefore contain exactly one
element from every pair. These choices define a truth assignment.
Every clause set is hit only if at least one of its literals is true
under this assignment, so the formula is satisfied.

The construction introduces two elements and one pair set for each
variable, together with at most one set for each clause. Pair sets have
size two, and clause sets have size at most \(d\). The construction
scans the formula once and therefore runs in linear time.
\end{proof}

Let \(N\) and \(M\) denote the numbers of variables and clauses in
\(\Phi_{\mathcal F}\), including all auxiliary variables. Applying
\Cref{lem:variable-pair} produces an instance
\[
(U',\mathcal S',k'),
\qquad
k'=N.
\]
By \Cref{lem:representation-correctness,lem:variable-pair}, this
instance is a YES-instance if and only if the input has a hitting set
\(Z\) of size at most \(k\) such that \(H_\sigma\) is injective on
\(Z\cap W\) for some map \(\sigma\colon[q_1]\to[t_2]\).

\subsection{Kernel Bound}

The quantity \(L=1+\ell\) is the number of variables used by one table
lookup. Its definition and the bound \(|W|=O_d(k^{d-1})\) give
\begin{equation}\label{eq:address-length-bound}
L=O_d(\log k).
\end{equation}

\begin{lemma}\label{lem:framework-size}
The number of variables in \(\Phi_{\mathcal F}\) satisfies
\begin{equation}\label{eq:framework-variable-bound}
N=O_d\left(k+q_1t_2q_2L\right).
\end{equation}
The number of clauses satisfies
\begin{equation}\label{eq:framework-clause-bound}
M=
O_d\left(
m\,t_2^{d-1}L^{d-1}+N
\right).
\end{equation}
The resulting \HS{} instance satisfies
\begin{equation}\label{eq:framework-element-bound}
|U'|=2N=2k'
\end{equation}
and
\begin{equation}\label{eq:framework-set-bound}
|\mathcal S'|
=
O_d\left(
m\,t_2^{d-1}L^{d-1}+N
\right).
\end{equation}
\end{lemma}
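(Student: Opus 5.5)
The proof is a direct count of the variables and clauses in \(\Phi_{\mathcal F}\), grouped by the kind of constraint that produces them, followed by an application of \Cref{lem:variable-pair}.

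\emph{Variables.} There are five groups.
\begin{itemize}
\item The direct variables \(s_x\) number \(|X|\le dk\).
\item The selector variables \(w_{c,r}\) number \(q_1t_2\).
\item The table variables \(a_{c,j}\) and \(z_{c,j,\beta}\) number \(q_1q_2(1+\ell)=q_1q_2L\).
\item The conditional variables \(p^{\mathrm{act}}_{c,r,j}\) and \(p^{\delta}_{c,r,j,\beta}\) from \eqref{eq:conditional-active} and \eqref{eq:conditional-bit} number \(q_1t_2q_2(1+2\ell)=O(q_1t_2q_2L)\). This is the dominant term.
\item The auxiliary variables of the cardinality encodings.
\end{itemize}
For the last group, \Cref{lem:counters} applied to \eqref{eq:second-level-choice} gives \(O(t_2+1)\) auxiliary variables for each of the \(q_1\) buckets, so \(O(q_1t_2)\) in total. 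Applied to \eqref{eq:shared-budget}, it gives \(O(|X|+q_1q_2+1)\).

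All of these are \(O_d(k+q_1t_2q_2L)\), because \(t_2\ge1\) and \(L\ge1\). This proves \eqref{eq:framework-variable-bound}.

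\emph{Clauses.} There are three groups.
\begin{itemize}
\item The cardinality encodings contribute \(O(q_1t_2+|X|+q_1q_2+1)=O(N)\) clauses by \Cref{lem:counters}.
\item Each defining equivalence of a conditional variable has a constant-size CNF, so these contribute \(O(q_1t_2q_2L)=O(N)\) clauses.
\item The hitting constraints are the main contribution.
\end{itemize}
For each \(A\in\mathcal S\), the number of local maps \(\rho\colon C_A\to[t_2]\) is \(t_2^{|C_A|}\le t_2^{d-1}\), since \(|C_A|\le d-1\) by \eqref{eq:outside-X}. By \eqref{eq:clauses-per-hash-set}, each \(F_{A,\rho}\) expands into at most \(L^{d-1}\) clauses. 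Summing over the \(m\) sets gives \(m\,t_2^{d-1}L^{d-1}\) clauses, which proves \eqref{eq:framework-clause-bound}.

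\emph{The \(d\)-Hitting Set instance.} We apply \Cref{lem:variable-pair} to \(\Phi_{\mathcal F}\). Every clause of \(\Phi_{\mathcal F}\) has width at most \(d\), and \(d\ge3\). The lemma yields exactly \(2N\) elements and budget \(k'=N\), which is \eqref{eq:framework-element-bound}. It also yields at most \(N+M\) sets, which together with the clause bound is \eqref{eq:framework-set-bound}.

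The only step needing care is the bound \(|C_A|\le d-1\) on the number of local maps. This relies on \eqref{eq:outside-X}, and hence on the maximality of the packing.
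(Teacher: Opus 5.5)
Your proposal is correct and follows the paper's proof essentially step for step. You use the same grouping of variables, with the conditional variables \(q_1t_2q_2(1+2\ell)\) dominating, the same count of at most \(t_2^{d-1}\) local maps \(\rho\) per set times \(L^{d-1}\) clauses per expansion via \(|C_A|\le d-1\), and the same appeal to the variable-pair reduction for \(|U'|=2N=2k'\) and \(|\mathcal S'|\le N+M\).
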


\begin{proof}
The shared table has \(q_1q_2\) activation variables and
\(q_1q_2\ell\) address variables, for a total of \(q_1q_2L\).
The direct variables for \(X\) contribute \(|X|=O_d(k)\).
The second-level selection variables contribute \(q_1t_2\), and the
conditional variables contribute
\[
q_1t_2q_2(1+2\ell)
=
O(q_1t_2q_2L).
\]
By \Cref{lem:counters}, the cardinality encodings introduce a number of
auxiliary variables linear in the corresponding variable lists. These
contributions give \eqref{eq:framework-variable-bound}.

For each \(A\in\mathcal S\), there are at most \(t_2^{d-1}\) maps
\(\rho\), because \(|C_A|\leq d-1\).
Equation~\eqref{eq:clauses-per-hash-set} gives at most \(L^{d-1}\)
clauses for each such map. The hitting constraints therefore contribute
\[
O_d\left(m\,t_2^{d-1}L^{d-1}\right)
\]
clauses. The conditional and cardinality constraints contribute
\(O_d(N)\) further clauses. Finally, \Cref{lem:variable-pair}
introduces one pair set for every variable and at most one set for every
clause. This proves all four bounds.
\end{proof}

\begin{theorem}\label{thm:coloring-framework}
Let \(d\geq3\) be fixed, and let \(b,q_1,q_2,t_2\) be positive,
polynomially bounded integer-valued functions of \(k\). Suppose that,
for every set \(W\) obtained by the construction above, one can sample
in expected polynomial time from a
\((k,b,p)\)-two-level separating distribution with these parameters.
Then \HS{} admits a one-sided randomized kernel with success probability
at least \(p\).
\end{theorem}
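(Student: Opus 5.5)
The kernel is assembled from the pieces already built; the proof verifies one-sided correctness, the size bound, and the running time.

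The algorithm runs as follows. Given \((U_0,\mathcal S_0,k)\), it first handles the trivial cases and applies \Cref{lem:reduced-instance}, obtaining an equivalent \((U,\mathcal S,k)\) with \(n=O_d(k^{d-1})\) and \(m=O_d(k^d)\). It then greedily builds the maximal packing \(\mathcal P\). If \(|\mathcal P|>k\), it outputs a fixed constant-size NO-instance, which is correct with probability one. Otherwise it forms \(X\) and \(W\), samples \(\mathcal F=(\widetilde h,\mathcal H_2)\) from the assumed \((k,b,p)\)-two-level separating distribution for this \(W\), builds \(\Phi_{\mathcal F}\), and applies \Cref{lem:variable-pair}. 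The output is \((U',\mathcal S',k')\) with \(k'=N\).

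Correctness follows from \Cref{lem:representation-correctness} combined with \Cref{lem:variable-pair}: the output is a YES-instance if and only if some hitting set \(Z\) with \(|Z|\le k\) and some \(\sigma\) make \(H_\sigma\) injective on \(Z\cap W\). This holds for every sampled \(\mathcal F\).
\begin{itemize}
\item \emph{NO-instances.} If the reduced instance is a NO-instance, no such \(Z\) exists, so the output is a NO-instance for every sample. This gives probability one on NO-instances.
\item \emph{YES-instances.} Fix one hitting set \(Z\) with \(|Z|\le k\) and set \(T=Z\cap W\), so \(|T|\le k\). By \Cref{def:random-separating-list}, with probability at least \(p\) the sampled \(\mathcal F\) separates \(T\) in two levels with load \(b\). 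By \Cref{def:two-level-separation}, this gives a \(\sigma\) such that each \(g_{\sigma(c)}\) is injective on \(T\cap\widetilde h^{-1}(c)\). Then \(H_\sigma\) is injective on \(T\): distinct buckets differ in the first coordinate, and elements in the same bucket differ in the second. So the output is YES with probability at least \(p\).
\end{itemize}
The key point is that \(Z\) is fixed before sampling, as the definition requires, so the pointwise guarantee applies directly.

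For the size bound, \Cref{lem:framework-size} gives
\[
|U'|=2N=O_d(k+q_1t_2q_2L),\qquad
|\mathcal S'|=O_d(m\,t_2^{d-1}L^{d-1}+N),\qquad k'=N.
\]
Here \(L=O_d(\log k)\) by \eqref{eq:address-length-bound}, \(m=O_d(k^d)\), and \(q_1,q_2,t_2\) are polynomially bounded in \(k\). Hence the whole output, including the parameter, has size bounded by a polynomial, and in particular a computable function, of \(k\).

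The step needing the most care is the running time. Sampling takes expected polynomial time by hypothesis. Building the formula requires, for each \(A\in\mathcal S\), enumerating the at most \(t_2^{d-1}\) local maps \(\rho\) and expanding each \(F_{A,\rho}\) into at most \(L^{d-1}\) clauses; this is polynomial because \(d\) is fixed and all parameters are polynomial in \(k\le|U|\). The counter encodings are polynomial by \Cref{lem:counters}, and \Cref{lem:variable-pair} runs in linear time. Note that one must invoke the polynomial boundedness of \(q_1,q_2,t_2,b\) to bound both the construction time and the output size. The total expected running time is therefore polynomial, and the kernel is a one-sided randomized kernel with success probability at least \(p\).
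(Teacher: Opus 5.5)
Your proposal is correct and follows essentially the same route as the paper's proof. It gets one-sided correctness from \Cref{lem:representation-correctness} and \Cref{lem:variable-pair}, which hold for every sample; it handles the YES case by applying the pointwise guarantee of \Cref{def:random-separating-list} to \(Z\cap W\) for a solution fixed before sampling; and it obtains the size and time bounds from \Cref{lem:framework-size} together with the polynomial bounds on the parameters. Your extra details, namely deriving the injectivity of \(H_\sigma\) from two-level separation and accounting for the enumeration of the \(t_2^{d-1}\) local maps, only spell out steps the paper states briefly.
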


\begin{proof}
If the maximal packing has more than \(k\) sets, the fixed NO-instance
returned by the packing step is correct. Suppose otherwise that the
construction reaches the formula \(\Phi_{\mathcal F}\).

Fix a YES-instance and a hitting set \(Z\) of size at most \(k\).
By \Cref{def:random-separating-list}, with probability at least \(p\),
the sampled pair separates \(Z\cap W\). Hence there exists a map
\(\sigma\colon[q_1]\to[t_2]\) such that \(H_\sigma\) is injective on
\(Z\cap W\). By
\Cref{lem:representation-correctness,lem:variable-pair}, the output is
then a YES-instance.

Conversely, the same two lemmas show that every YES-output yields a
hitting set of size at most \(k\) for the input. This implication holds
for every sampled pair, so a NO-instance is never mapped to a
YES-instance.

The output bounds follow from \Cref{lem:framework-size}. By
\Cref{lem:reduced-instance},
\[
|W|=O_d(k^{d-1})
\qquad\text{and}\qquad
m=O_d(k^d).
\]
Together with the polynomial bounds on \(b,q_1,q_2,t_2\), these
estimates show that the output size is polynomially bounded in \(k\).
Sampling the hash objects, constructing \(\Phi_{\mathcal F}\), and
applying \Cref{lem:variable-pair} take expected polynomial time.
\end{proof}

\section{Random Hashing and the Randomized Kernel}
\label{sec:randomized-kernels}

The previous chapter leaves one main question. How can we choose the
hash functions so that the elements of an unknown solution remain
distinguishable in the shared table? Random hashing gives a simple
answer. Imagine that a solution has been fixed. The first hash scatters
its table-encoded elements among many buckets, so that with constant
probability no bucket receives too many of them. A short list of
second-level hashes then makes a suitable function available for every
bucket, and the formula chooses the function that separates the selected
elements in that bucket.

A successful sample allows the fixed solution to be encoded. An
unsuccessful sample may lose this encoding, but it cannot create a false
hitting set, because every satisfying assignment still decodes into a
valid solution. This is the source of the one-sided error. We now prove
that the required sampling event occurs with probability at least
\(3/4\), and then use \Cref{lem:framework-size} to obtain the kernel
bounds.

\subsection{A Random Two-Level Separating Distribution}

\begin{lemma}\label{lem:random-two-level-distribution}
Let \(k\geq1\), let \(W\) be a finite set, and write
\[
\lambda=1+\left\lceil\log_2 k\right\rceil.
\]
There is a \((k,b,3/4)\)-two-level separating distribution over pairs
\((\widetilde h,\mathcal H_2)\) that can be sampled in expected
polynomial time and whose parameters satisfy
\[
1\leq b\leq k,
\qquad
b=O(\lambda),
\qquad
q_1=O\!\left(\frac{k}{\lambda}\right),
\qquad
q_2=O(\lambda^2),
\qquad
t_2=O(\lambda).
\]
\end{lemma}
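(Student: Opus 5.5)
We sample \(\widetilde h\colon W\to[q_1]\) uniformly at random with \(q_1=\lceil k/\lambda\rceil\). We sample \(g_1,\ldots,g_{t_2}\colon W\to[q_2]\) independently and uniformly with \(q_2=\lambda^2\) and \(t_2=\lambda\). We set \(b=\min\{k,C\lambda\}\) for a suitable absolute constant \(C\). Fully random functions on \(W\) are fine because \(|W|\) is polynomial, so sampling takes polynomial time. Fix \(T\subseteq W\) with \(|T|\le k\). We bound the failure probability of the two requirements of \Cref{def:two-level-separation} separately, each by \(1/8\), and take a union bound to get success probability at least \(3/4\).

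\textbf{First level.} For a fixed bucket \(c\), \(|T\cap\widetilde h^{-1}(c)|\) is binomial with mean \(|T|/q_1\le\lambda\). A Chernoff bound in the form \(\Pr[X\ge t]\le (e\mu/t)^t\) with \(t=C\lambda\) gives
\[
\Pr\bigl[|T\cap\widetilde h^{-1}(c)|>C\lambda\bigr]\le (e/C)^{C\lambda}\le 2^{-4\lambda}\le \frac{1}{16k^2}
\]
for \(C\) large, say \(C=4e\). A union bound over the at most \(q_1\le k\) buckets leaves failure probability at most \(1/16\). If \(b=k\), the load condition holds trivially.

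\textbf{Second level.} Condition on \(\widetilde h\) satisfying the load bound. For a bucket \(c\) with \(B_c=T\cap\widetilde h^{-1}(c)\) and \(|B_c|\le b\), a single uniform \(g_r\) fails to be injective on \(B_c\) with probability at most \(\binom{b}{2}/q_2\). To make this at most \(1/2\), we choose \(q_2=\lceil C^2\lambda^2\rceil=O(\lambda^2)\), which keeps the stated asymptotics. The \(t_2\) functions are independent of each other and of \(\widetilde h\). So the probability that none of them is injective on \(B_c\) is at most \(2^{-t_2}\). Taking \(t_2=\lambda+4\), this is at most \(1/(16k)\). A union bound over buckets containing elements of \(T\), of which there are at most \(\min\{q_1,k\}\le k\), gives conditional failure at most \(1/16\).

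When every bucket has some working index, we define \(\sigma(c)\) to be that index, and arbitrary for empty buckets. Then \(\sigma\) witnesses two-level separation. The total failure probability is at most \(1/8<1/4\).

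The only delicate point is the first-level concentration. The mean is \(\lambda\approx\log k\), and we need a per-bucket tail of \(1/\mathrm{poly}(k)\) at threshold \(O(\lambda)\). This works exactly because \(\lambda=\Theta(\log k)\), and the constants just need to be tracked. Small \(k\), for example \(k=1\), is handled by \(b\le k\) and the trivial case. All parameters are \(O(\lambda)\), \(O(k/\lambda)\), or \(O(\lambda^2)\) as required.
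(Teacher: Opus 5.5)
Your proof is correct and takes essentially the same route as the paper. The paper also uses a uniformly random first-level map into \(\lceil k/\lambda\rceil\) buckets, bounds the load by \(O(\lambda)\) through a tail bound plus a union bound over buckets, and then uses \(O(\lambda)\) independent uniform second-level maps of range \(O(\lambda^2)\), with a birthday bound, independence, and a union bound over buckets. Only the constants differ: the paper takes \(b=\min\{k,8\lambda\}\), \(q_2=4b^2\), \(t_2=\lceil\log_8(8q_1)\rceil\), and multiplies the two success probabilities to get \(49/64\). The only things to tidy are stating your final choices \(q_2=\lceil C^2\lambda^2\rceil\) and \(t_2=\lambda+4\) up front instead of the initial \(\lambda^2\) and \(\lambda\), and making \(b\) an integer, e.g.\ \(b=\min\{k,\lfloor 4e\lambda\rfloor\}\).
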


\begin{proof}
Set
\begin{equation}\label{eq:random-two-level-parameters}
b=\min\{k,8\lambda\},
\qquad
q_1=\left\lceil\frac{k}{\lambda}\right\rceil,
\qquad
q_2=4b^2,
\qquad
t_2=\left\lceil\log_8(8q_1)\right\rceil.
\end{equation}

The first hash distributes the relevant elements so that every bucket
has logarithmic load. A logarithmic number of second-level functions,
each with a range quadratic in this load, then makes an injective
function available for every bucket with constant probability.

Sample a function
\[
\widetilde h\colon W\to[q_1]
\]
uniformly at random. Independently sample
\[
g_1,\ldots,g_{t_2}\colon W\to[q_2]
\]
uniformly at random, and let
\[
\mathcal H_2=(g_1,\ldots,g_{t_2}).
\]
Every hash value is chosen independently and uniformly from the
corresponding range.

Fix an arbitrary \(T\subseteq W\) with \(|T|\leq k\).

\emph{First-level load.}
If \(|T|\leq b\), then
\[
|T\cap\widetilde h^{-1}(c)|\leq b
\qquad
\text{for every }c\in[q_1]
\]
holds for every \(\widetilde h\). Suppose therefore that \(|T|>b\).
Then \(b<k\), and hence \(b=8\lambda\).

For a fixed bucket \(c\), the event
\[
|T\cap\widetilde h^{-1}(c)|\geq b+1
\]
implies that some \(b+1\) elements of \(T\) all receive the value \(c\).
Taking a union bound over the buckets and the choices of these elements
gives
\[
\begin{aligned}
\Pr\left[
\exists c\in[q_1]\text{ such that }
|T\cap\widetilde h^{-1}(c)|\geq b+1
\right]
&\leq
q_1\binom{|T|}{b+1}q_1^{-(b+1)}\\
&\leq
q_1
\left(
\frac{ek}{(b+1)q_1}
\right)^{b+1}.
\end{aligned}
\]
Since \(q_1\geq k/\lambda\) and \(b=8\lambda\),
\[
\frac{ek}{(b+1)q_1}
\leq
\frac{e\lambda}{b+1}
<
\frac12.
\]
Moreover, \(|T|>b\) implies \(k>8\lambda\), and
\(q_1=\lceil k/\lambda\rceil\leq k\). The definition of \(\lambda\)
also gives \(k<2^\lambda\). It follows that
\[
\begin{aligned}
\Pr\left[
\exists c\in[q_1]\text{ such that }
|T\cap\widetilde h^{-1}(c)|>b
\right]
&\leq
q_1\,2^{-(b+1)}\\
&\leq
2^\lambda 2^{-(8\lambda+1)}\\
&=
2^{-7\lambda-1}\\
&\leq
\frac18.
\end{aligned}
\]
Thus, with probability at least \(7/8\), every first-level bucket
contains at most \(b\) elements of \(T\).

\emph{Second-level separation.}
Condition on a first-level function \(\widetilde h\) satisfying the
load bound. Since \(\mathcal H_2\) is sampled independently of
\(\widetilde h\), the sets
\[
T_c=T\cap\widetilde h^{-1}(c),
\qquad
c\in[q_1],
\]
may now be treated as fixed sets of size at most \(b\).

For a fixed bucket \(c\) and one random function \(g_r\), a union bound
over pairs of elements in \(T_c\) gives
\[
\Pr[g_r\text{ is not injective on }T_c]
\leq
\frac{\binom{|T_c|}{2}}{q_2}
\leq
\frac{\binom{b}{2}}{4b^2}
<
\frac18.
\]
Since \(g_1,\ldots,g_{t_2}\) are independent,
\[
\Pr\left[
\text{no }g_r\in\mathcal H_2
\text{ is injective on }T_c
\right]
\leq
8^{-t_2}.
\]
A union bound over the \(q_1\) first-level buckets gives
\[
\Pr\left[
\exists c\in[q_1]\text{ for which no }g_r
\text{ is injective on }T_c
\right]
\leq
q_1\,8^{-t_2}
\leq
\frac18.
\]
Consequently, conditioned on the first-level load bound, with
probability at least \(7/8\), every first-level bucket has a
second-level function that is injective on the elements of \(T\) in
that bucket.

\emph{Combining the two levels.}
When both events above hold, choose for every \(c\in[q_1]\) an index
\(\sigma(c)\in[t_2]\) such that \(g_{\sigma(c)}\) is injective on
\(T_c\). The pair \((\widetilde h,\mathcal H_2)\) then separates \(T\)
in two levels with load \(b\), as required by
\Cref{def:two-level-separation}. The map \(\sigma\) is represented by
the second-level selection variables in the construction of
\Cref{sec:color-framework}.

The probability that both levels succeed is at least
\[
\frac78\cdot\frac78
=
\frac{49}{64}
>
\frac34.
\]
Since this holds for every \(T\subseteq W\) of size at most \(k\) fixed
before the functions are sampled, the resulting distribution is a
\((k,b,3/4)\)-two-level separating distribution.

Since \(\lambda\leq2k\), we have
\(q_1=\lceil k/\lambda\rceil=O(k/\lambda)\).
Equation~\eqref{eq:random-two-level-parameters} then gives
\[
b=O(\lambda),
\qquad
q_1=O\!\left(\frac{k}{\lambda}\right),
\qquad
q_2=O(\lambda^2),
\qquad
t_2=O(\lambda).
\]

Sampling \(\widetilde h\) and \(\mathcal H_2\) requires
\(O(|W|(1+t_2))\) independent values from finite ranges. Each value can
be sampled exactly by rejection sampling in constant expected trials.
The pair can therefore be sampled in expected polynomial time.
\end{proof}

\subsection{The Randomized Kernel}

\begin{lemma}\label{lem:random-two-level-kernel}
For every fixed \(d\geq3\) and every \(k\geq1\), sampling
\((\widetilde h,\mathcal H_2)\) as in
\Cref{lem:random-two-level-distribution} and applying the construction
of \Cref{sec:color-framework} gives a one-sided randomized kernel with
success probability at least \(3/4\). Every output instance
\((U',\mathcal S',k')\) satisfies
\[
\begin{aligned}
|U'|&=O_d(k\log^3 k),\\
|\mathcal S'|&=O_d\!\left(k^d\log^{2d-2}k\right),\\
k'&=O_d(k\log^3 k).
\end{aligned}
\]
The kernel can be constructed in expected polynomial time.
\end{lemma}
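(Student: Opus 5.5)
The plan is to combine \Cref{thm:coloring-framework} with \Cref{lem:random-two-level-distribution} and then substitute the parameters into \Cref{lem:framework-size}.

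\textbf{Correctness and running time.} The parameters of \Cref{lem:random-two-level-distribution} are \(b=O(\lambda)\), \(q_1=O(k/\lambda)\), \(q_2=O(\lambda^2)\) and \(t_2=O(\lambda)\), where \(\lambda=1+\lceil\log_2k\rceil\). These are positive and polynomially bounded, and the distribution is \((k,b,3/4)\)-separating and samplable in expected polynomial time. Hence \Cref{thm:coloring-framework} applies directly. It gives that NO-instances are always mapped to NO-instances, that YES-instances are mapped to YES-instances with probability at least \(3/4\), and that the whole construction runs in expected polynomial time. The trivial cases and the packing step with \(|\mathcal P|>k\) output fixed constant-size instances, which trivially satisfy the size bounds.

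\textbf{Size bounds.} By \eqref{eq:address-length-bound}, \(L=O_d(\log k)\), and \(\lambda=\Theta(\log k)\) for \(k\geq2\). The case \(k=1\) is absorbed into constants; I will read \(\log k\) as \(\max\{1,\log k\}\). Plugging into \eqref{eq:framework-variable-bound} gives
\[
N=O_d\!\left(k+\tfrac{k}{\lambda}\cdot\lambda\cdot\lambda^2\cdot\log k\right)=O_d(k\log^3k).
\]
By \eqref{eq:framework-element-bound}, \(|U'|=2N\) and \(k'=N\), which yields the element and parameter bounds. For the sets, \eqref{eq:framework-set-bound} together with \(m=O_d(k^d)\) from \Cref{lem:reduced-instance} gives
\[
|\mathcal S'|=O_d\!\left(k^d\lambda^{d-1}\log^{d-1}k+k\log^3k\right)=O_d(k^d\log^{2d-2}k).
\]
The second term is dominated because \(d\geq3\).

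\textbf{Main obstacle.} The only delicate point is bookkeeping: \(L\) must be bounded by \(O_d(\log k)\) rather than \(O(\log|W|)\) with \(|W|\) unbounded. This follows because \(|W|\le n=O_d(k^{d-1})\) after \Cref{lem:reduced-instance}. I also have to make sure the framework's \(O_d(k)\) term from \(|X|\) and the \(q_1t_2\) selection variables are dominated by the \(q_1t_2q_2L\) term, which is immediate.
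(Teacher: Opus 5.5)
Your proposal is correct and follows the paper's proof essentially step for step: you invoke \Cref{thm:coloring-framework} via \Cref{lem:random-two-level-distribution}, then substitute \(q_1=O(k/\lambda)\), \(q_2=O(\lambda^2)\), \(t_2=O(\lambda)\), \(L=O_d(\lambda)\) and \(m=O_d(k^d)\) into \Cref{lem:framework-size} to get \(N=O_d(k\lambda^3)\) and \(|\mathcal S'|=O_d(k^d\lambda^{2d-2})\). Reading \(\log k\) as \(\max\{1,\log k\}\) so that \(k=1\) is covered is a small extra bit of care that the paper leaves implicit in its use of \(\lambda=\Theta(\log k)\).
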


\begin{proof}
By \Cref{lem:random-two-level-distribution}, the sampled pair is drawn
from a \((k,b,3/4)\)-two-level separating distribution, and its
parameters are polynomially bounded in \(k\).
\Cref{thm:coloring-framework} therefore gives a one-sided randomized
kernel with success probability at least \(3/4\).

Let \(\lambda\) be as in
\Cref{lem:random-two-level-distribution}. That lemma,
\Cref{lem:reduced-instance}, and
\eqref{eq:address-length-bound} give
\[
q_1=O\!\left(\frac{k}{\lambda}\right),
\qquad
q_2=O(\lambda^2),
\qquad
t_2=O(\lambda),
\qquad
L=O_d(\lambda),
\qquad
m=O_d(k^d).
\]
Substituting these bounds into
\eqref{eq:framework-variable-bound} and
\eqref{eq:framework-set-bound} gives
\[
N=O_d(k\lambda^3)
\]
and
\[
|\mathcal S'|
=
O_d\!\left(
k^d\lambda^{2d-2}+k\lambda^3
\right)
=
O_d\!\left(k^d\lambda^{2d-2}\right).
\]
By \eqref{eq:framework-element-bound}, the variable-pair reduction gives
\[
|U'|=2N,
\qquad
k'=N=O_d(k\lambda^3).
\]
Since \(\lambda=\Theta(\log k)\), the claimed element and set bounds
and output-parameter bound follow.

The hash functions are sampled in expected polynomial time by
\Cref{lem:random-two-level-distribution}. Once they have been sampled,
the CNF formula and the resulting Hitting Set instance are generated in
time linear in their respective encoding sizes, up to factors depending
only on \(d\). These encoding sizes are polynomially bounded by
\Cref{lem:framework-size}. The complete kernelization therefore runs in
expected polynomial time.
\end{proof}

Together with the treatment of \(k=0\) in
\Cref{sec:preliminaries}, \Cref{lem:random-two-level-kernel} proves
\Cref{thm:intro-random-two-level}.

\section{Deterministic Hashing and Deterministic Kernel}
\label{sec:det}

Randomness makes the preceding construction possible because the sampled
hashes only need to work for the solution under consideration. To remove
the randomness, we need a fixed collection of hash functions that
contains a suitable function for every possible solution. An explicit
perfect hash family provides such a collection. The choice of a hash function must also be encoded without
increasing the clause width.

We make room for the hash-selection literal by replacing two
direct-selection literals with one shared pair variable.
We preprocess the instance so that every input set contains a
designated pair from a common small set, with only \(O_d(k^2)\)
distinct pairs used in total. A single variable records whether at least
one element of the pair is selected, and this variable is shared by all
hash branches. The remaining elements continue to use a shared
row--address table. Low-degree
polynomials over a finite field provide the hash family explicitly, and
the resulting construction gives a deterministic kernel with
\(O_d(k^2\log k)\) elements.

\subsection{A Small Set Meeting Every Input Set Twice}

We now construct the common set used above. We combine a maximal packing
with a high-degree rule for \(d\)-Hitting Set
\cite{AbuKhzam2010}. If \(k+1\) sets contain an element \(x\) and are
pairwise disjoint after \(x\) is removed, then every hitting set of size
at most \(k\) must contain \(x\).

\begin{lemma}\label{lem:two-hit-preprocessing}
There is a polynomial-time algorithm that transforms the instance
obtained from \Cref{lem:reduced-instance} into an equivalent instance
\((U,\mathcal S,k)\), together with a set \(C\subseteq U\), such that
\[
|C|=O_d(k^2),
\qquad
|A\cap C|\geq2
\quad\text{for every }A\in\mathcal S.
\]
Moreover, the algorithm assigns each \(A\in\mathcal S\) a pair
\(\pi(A)\subseteq A\cap C\) such that the family of distinct assigned
pairs
\[
\Pi=\{\pi(A):A\in\mathcal S\}
\]
satisfies \(|\Pi|=O_d(k^2)\).
\end{lemma}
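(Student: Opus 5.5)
Take a maximal packing \(\mathcal P\) with \(|\mathcal P|\leq k\) (else output a fixed NO-instance) and set \(X=\bigcup\mathcal P\), so \(|X|\leq dk\) and every set meets \(X\). Sets with \(|A\cap X|\geq2\) already contain two elements of \(X\). The difficulty is the sets meeting \(X\) in exactly one element \(x\). For each \(x\in X\), let \(\mathcal S_x=\{A\setminus\{x\}:A\in\mathcal S,\ A\cap X=\{x\}\}\). This is a family of nonempty sets of size at most \(d-1\) inside \(W=U\setminus X\); the empty set cannot occur because singletons \(\{x\}\) meet \(X\) once and would be treated separately, by forcing \(x\) as below.

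First apply the high-degree rule. Greedily compute a maximal packing \(\mathcal Q_x\) in \(\mathcal S_x\). If \(|\mathcal Q_x|\geq k+1\), the corresponding sets contain \(x\) and are pairwise disjoint after removing \(x\), so every hitting set of size at most \(k\) contains \(x\). We then delete all sets containing \(x\), delete \(x\), decrease \(k\) by one, and restart the whole procedure. If \(k\) would become negative, or \(\emptyset\) appears, we output a trivial instance. Each application removes an element, so there are polynomially many rounds, and each round preserves equivalence. A singleton \(\{x\}\) is handled the same way, by forcing \(x\). When no rule applies, \(|\mathcal Q_x|\leq k\) for every \(x\in X\). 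Let \(Y_x=\bigcup\mathcal Q_x\subseteq W\), so \(|Y_x|\leq(d-1)k\). By maximality, every \(A\) with \(A\cap X=\{x\}\) meets \(Y_x\). Note that after forcing, the new instance keeps the size bounds from \Cref{lem:reduced-instance}, since sets and elements are only deleted and \(k\) decreases.

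Define \(C=X\cup\bigcup_{x\in X}Y_x\), so \(|C|\leq dk+dk\cdot(d-1)k=O_d(k^2)\). Assign pairs as follows. If \(|A\cap X|\geq2\), let \(\pi(A)\) be two elements of \(A\cap X\). Otherwise \(A\cap X=\{x\}\), and we set \(\pi(A)=\{x,y\}\) for some \(y\in A\cap Y_x\). Then \(|A\cap C|\geq2\) and \(\pi(A)\subseteq A\cap C\). Every assigned pair lies in \(\binom{X}{2}\cup\{\{x,y\}:x\in X,\ y\in Y_x\}\). This has size at most \(\binom{dk}{2}+dk\cdot(d-1)k=O_d(k^2)\), giving \(|\Pi|=O_d(k^2)\).

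The main obstacle is the bookkeeping of the iterated forcing. We must keep \(|\mathcal P|\leq k\) with respect to the current budget, note that the parameter only decreases, and handle degenerate outcomes (empty sets, \(k<0\)). These are dealt with by recomputing \(\mathcal P\) after every forcing step and stopping at the first round in which no rule fires. The structural claim itself is immediate from the two maximality arguments.
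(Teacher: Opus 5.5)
Your proposal is correct and follows essentially the same route as the paper: a maximal packing giving $X$, per-element maximal packings of the residuals $A\setminus\{x\}$ with the high-degree forcing rule (singletons included), then $C=X\cup\bigcup_{x\in X}Y_x$, and pairs taken from $\binom{X}{2}$ or of the form $\{x,y\}$ with $y\in Y_x$. One remark is imprecise: after $k$ drops, deleting elements and sets preserves the bounds of \Cref{lem:reduced-instance} (namely $O_d(k^{d-1})$ elements and $O_d(k^d)$ sets) only relative to the original parameter, not the current one; this still suffices for the final kernel size, and the paper instead re-applies the preliminary cases and \Cref{lem:reduced-instance} after every forcing step so that the bounds hold for the current $k$.
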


\begin{proof}
Whenever an element \(x\) is forced, we delete \(x\) and all sets
containing it, decrease \(k\) by one, apply the preliminary cases and
\Cref{lem:reduced-instance}, and restart. In particular, we perform this
operation whenever a singleton set \(\{x\}\) occurs.

Assume that no singleton set remains. Construct an inclusion-maximal
family \(\mathcal P\subseteq\mathcal S\) of pairwise disjoint sets. If
\(|\mathcal P|>k\), return a fixed NO-instance. Otherwise, put
\[
X=\bigcup_{A\in\mathcal P}A.
\]
Then \(|X|\leq dk\), and maximality of \(\mathcal P\) ensures that every
set in \(\mathcal S\) intersects \(X\).

For each \(x\in X\), choose an inclusion-maximal pairwise-disjoint
family
\[
\mathcal N_x
\subseteq
\{A\setminus\{x\}:A\in\mathcal S\text{ and }A\cap X=\{x\}\}.
\]
If \(|\mathcal N_x|>k\), then every hitting set avoiding \(x\) must
contain more than \(k\) elements, one for each disjoint residual set.
Hence \(x\) is forced, and we restart.

When no further restart is required, define
\[
Y_x=\bigcup_{R\in\mathcal N_x}R
\qquad\text{and}\qquad
C=X\cup\bigcup_{x\in X}Y_x.
\]
Since \(|\mathcal N_x|\leq k\) and every residual set has size at most
\(d-1\),
\[
|C|
\leq dk+dk(d-1)k
=
O_d(k^2).
\]

We now define the designated pair for each input set. Fix
\(A\in\mathcal S\). If \(|A\cap X|\geq2\), let \(\pi(A)\) be any two
elements of \(A\cap X\). Otherwise, maximality of \(\mathcal P\) gives
\(A\cap X=\{x\}\) for some \(x\in X\). The residual set
\(A\setminus\{x\}\) is nonempty because no singleton set remains, and
maximality of \(\mathcal N_x\) implies that it intersects \(Y_x\).
Choose
\[
y\in(A\setminus\{x\})\cap Y_x
\qquad\text{and set}\qquad
\pi(A)=\{x,y\}.
\]
In both cases, \(\pi(A)\subseteq A\cap C\), and hence
\(|A\cap C|\geq2\).

Let
\[
\Pi=\{\pi(A):A\in\mathcal S\}.
\]
Every pair in \(\Pi\) is either a pair from \(X\) or has the form
\(\{x,y\}\) with \(y\in Y_x\). Therefore,
\[
|\Pi|
\leq
\binom{|X|}{2}+\sum_{x\in X}|Y_x|
=
O_d(k^2).
\]

Each restart decreases \(k\), so there are at most \(k\) restarts. All
packings are constructed greedily, and the entire procedure runs in
polynomial time.
\end{proof}

We henceforth work with the instance returned by
\Cref{lem:two-hit-preprocessing} and write
\[
W=U\setminus C,
\qquad
n_W=|W|.
\]
Since every input set contains its designated pair in \(C\),
\begin{equation}\label{eq:det-outside-C}
|A\cap W|\leq d-2
\qquad
\text{for every }A\in\mathcal S.
\end{equation}

\subsection{An Explicit Perfect Hash Family}

A family of functions from \(W\) to \([q]\) is \(k\)-perfect on \(W\)
if, for every \(T\subseteq W\) with \(|T|\leq k\), some function in the
family is injective on \(T\).

\begin{lemma}\label{lem:det-perfect-hash}
One can construct in deterministic polynomial time a \(k\)-perfect
family
\[
\mathcal H=(h_1,\ldots,h_t),
\qquad
h_i\colon W\to[q],
\]
where \(q=O_d(k^2)\) is prime and \(t=O_d(k^2)\).
\end{lemma}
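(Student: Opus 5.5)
We will use the standard polynomial-evaluation construction over a prime field. Let \(n_W=|W|=O_d(k^{d-1})\). We choose a degree bound \(D=d-1\) (a constant) and a prime \(q\) with \(q\geq \max\{n_W^{1/D},\,\binom{k}{2}D+1\}\); by Bertrand's postulate we may take \(q\) in \([Q,2Q]\) for \(Q=\max\{\lceil n_W^{1/D}\rceil,\binom{k}{2}D+1\}\), found deterministically by trial division in polynomial time. Since \(n_W^{1/D}=O_d(k)\) and \(\binom{k}{2}D=O_d(k^2)\), this gives \(q=O_d(k^2)\). Because \(q^{D}\geq n_W\), we can injectively assign to every \(v\in W\) a distinct polynomial \(P_v\in\F_q[X]\) of degree less than \(D\), namely the one whose coefficient vector is the base-\(q\) expansion of the index of \(v\).

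The family is indexed by evaluation points: for every \(\alpha\in\F_q\), define \(h_\alpha(v)=P_v(\alpha)\), identified with an element of \([q]\). This gives \(t=q=O_d(k^2)\) functions, each computable in polynomial time. For the perfectness, fix \(T\subseteq W\) with \(|T|\leq k\). For distinct \(u,v\in T\), the polynomial \(P_u-P_v\) is nonzero of degree less than \(D\), so it has at most \(D-1\) roots; thus \(h_\alpha\) collides on \(\{u,v\}\) for at most \(D-1\) values \(\alpha\). By a union bound over pairs, at most \(\binom{k}{2}(D-1)<q\) evaluation points are bad, so some \(\alpha\in\F_q\) makes \(h_\alpha\) injective on \(T\).

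The only point requiring care is balancing the two constraints on \(q\): the field must be large enough to host \(n_W\) distinct low-degree polynomials, and large enough compared to the number of pairs times the degree. Using a constant degree \(d-1\) keeps the first requirement at \(O_d(k)\), so the pair-counting requirement \(O_d(k^2)\) dominates, and both \(q\) and \(t\) come out \(O_d(k^2)\) as claimed. (If \(k=1\), any single function suffices, and the bound holds trivially.)
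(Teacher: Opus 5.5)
Your proof is correct and follows essentially the paper's construction: label the elements of \(W\) by distinct low-degree polynomials over a prime field \(\mathbb F_q\), hash by evaluation, and use a union bound over the \(\binom{k}{2}\) pairs and the root count of \(P_u-P_v\). The only differences are parameter choices, and both give \(q,t=O_d(k^2)\). The paper uses degree less than \(r=\lceil (d-1)/2\rceil\), so \(q\ge n_W^{1/r}=O_d(k^2)\), and only \(t=(r-1)\binom{k}{2}+1\) evaluation points. You use degree less than \(d-1\), so the field-size requirement is only \(O_d(k)\), and you evaluate at all \(q\) points.
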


\begin{proof}
If \(W=\emptyset\), take \(q=2\), \(t=1\), and let \(h_1\) be the empty
map. Assume henceforth that \(W\neq\emptyset\), and set
\[
r=\left\lceil\frac{d-1}{2}\right\rceil,
\qquad
t=(r-1)\binom{k}{2}+1.
\]
The choice of \(r\) balances the number of available polynomial labels
against the number of evaluation points at which two labels can
collide. We label the elements of \(W\) by polynomials of degree less
than \(r\). Let
\[
R=
\max\left\{
2,
\left\lceil n_W^{1/r}\right\rceil,
t+1
\right\}.
\]
Choose a prime \(q\) with \(R\leq q<2R\). Such a prime exists by
Bertrand's postulate \cite{Ramanujan1919}, and it can be found
deterministically in polynomial time \cite{AgrawalKayalSaxena2004}.
Since \(n_W=O_d(k^{d-1})\) and \((d-1)/r\leq2\), we have
\(q=O_d(k^2)\).

There are \(q^r\geq n_W\) polynomials of degree less than \(r\) over
\(\F_q\). Assign a distinct polynomial \(f_v\in\F_q[z]\) to every
\(v\in W\). Since \(q>t\), choose distinct evaluation points
\(\alpha_1,\ldots,\alpha_t\in\F_q\), identify \(\F_q\) with \([q]\),
and define
\[
h_i(v)=f_v(\alpha_i).
\]

Let \(T\subseteq W\) with \(|T|\leq k\). Two distinct polynomial labels
agree at no more than \(r-1\) evaluation points. Hence at most
\[
(r-1)\binom{k}{2}=t-1
\]
evaluation points cause a collision among the elements of \(T\). Some
\(\alpha_i\) causes no collision, so \(h_i\) is injective on \(T\).
\end{proof}

\subsection{The Deterministic CNF Encoding}

We reuse the shared row--address representation of
\Cref{sec:color-framework}. Here one function \(h_i\) already separates
all selected elements in \(W\), so no second-level choice is needed. Put
\[
B=\max\left\{1,\left\lceil\frac{n_W}{q}\right\rceil\right\},
\qquad
\ell=\lceil\log_2 B\rceil,
\qquad
L=1+\ell.
\]
For each fixed \(h_i\), order the elements receiving each hash value and
split them into consecutive groups of size at most \(B\). The number of
groups for this \(h_i\) is at most
\[
\frac{n_W}{B}+q\leq2q.
\]
Label these refined buckets by rows in \([Q]\), where \(Q=2q\). Let
\(\operatorname{row}_i(v)\) be the row containing \(v\), and give the
elements in each row distinct \(\ell\)-bit addresses \(\gamma_i(v)\).
If \(h_i\) is injective on a set, its elements have different hash
values and therefore lie in different rows. Hence
\(\operatorname{row}_i\) is also injective on that set.

For every \(c\in C\), introduce a variable \(s_c\), which is true when
\(c\) is selected. For every \(i\in[t]\), introduce a variable \(y_i\),
which is true when \(h_i\) is active. Each row \(j\in[Q]\) has an
activation variable \(a_j\) and address variables
\(z_{j,1},\ldots,z_{j,\ell}\). Finally, for every \(P\in\Pi\), introduce
a variable \(\omega_P\), which is true when at least one element of
\(P\) is selected.

For \(v\in W\), let
\[
D_{i,v}
=
a_{\operatorname{row}_i(v)}
\wedge
\bigwedge_{\substack{b\in[\ell]\\\gamma_i(v)_b=1}}
z_{\operatorname{row}_i(v),b}
\wedge
\bigwedge_{\substack{b\in[\ell]\\\gamma_i(v)_b=0}}
\neg z_{\operatorname{row}_i(v),b}.
\]
Thus \(D_{i,v}\) is true exactly when the row containing \(v\) is active
and stores the address of \(v\). It contains \(L\) literals, and at most
one element in each row can satisfy its corresponding expression.

We require
\begin{equation}\label{eq:det-one-branch}
\sum_{i=1}^{t}y_i=1
\end{equation}
and
\begin{equation}\label{eq:det-budget}
\sum_{c\in C}s_c+\sum_{j=1}^{Q}a_j\leq k.
\end{equation}
By \Cref{lem:counters}, both constraints have linear-size CNF encodings
of width at most three. For every pair \(P=\{a,b\}\in\Pi\), we impose
\begin{equation}\label{eq:det-pair-variable}
\omega_P\leftrightarrow(s_a\vee s_b)
\end{equation}
using the clauses
\[
(\neg s_a\vee\omega_P),
\qquad
(\neg s_b\vee\omega_P),
\qquad
(s_a\vee s_b\vee\neg\omega_P).
\]

For every \(i\in[t]\) and \(A\in\mathcal S\), write
\(\pi(A)=\{a_A,b_A\}\) and require
\begin{equation}\label{eq:det-hitting-condition}
F_{i,A}
=
\neg y_i
\vee
\omega_{\pi(A)}
\vee
\bigvee_{c\in(A\cap C)\setminus\pi(A)}s_c
\vee
\bigvee_{v\in A\setminus C}D_{i,v}.
\end{equation}
The literal \(\neg y_i\) disables an inactive branch. In the active
branch, \(\omega_{\pi(A)}\) replaces the two literals of the designated
pair. Thus \(F_{i,A}\) has \(|A|\leq d\) terms, and by
\eqref{eq:det-outside-C}, at most \(d-2\) of them are address
conjunctions.

As in \Cref{sec:cnf-encoding}, distribute disjunction over conjunction
in \eqref{eq:det-hitting-condition}. The result is an equivalent CNF
with at most \(L^{d-2}\) clauses, each of width at most \(d\). Let
\(\Phi_{\mathrm{det}}\) be the conjunction of these clauses, the
constraints in \eqref{eq:det-one-branch} and \eqref{eq:det-budget}, and
the pair definitions in \eqref{eq:det-pair-variable}.

\begin{lemma}\label{lem:det-encoding-size}
Let \(N\) and \(M\) be the numbers of variables and clauses in
\(\Phi_{\mathrm{det}}\), including the auxiliary variables introduced
by \Cref{lem:counters}. Then every clause has width at most \(d\), and
\[
N=O_d(k^2L),
\qquad
M=O_d\!\left(t|\mathcal S|L^{d-2}+k^2L\right).
\]
\end{lemma}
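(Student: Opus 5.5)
The plan is to count variables and clauses group by group, as in \Cref{lem:framework-size}, after first fixing the sizes of the basic parameters. By \Cref{lem:two-hit-preprocessing} we have \(|C|=O_d(k^2)\) and \(|\Pi|=O_d(k^2)\). By \Cref{lem:det-perfect-hash} we have \(q=O_d(k^2)\) and \(t=O_d(k^2)\), so \(Q=2q=O_d(k^2)\). Since \(n_W=O_d(k^{d-1})\) and \(B\le n_W+1\), we get \(\ell=O_d(\log k)\); in particular \(L\ge1\), and \(L=O_d(\log k)\) as in \eqref{eq:address-length-bound}.

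\textbf{Variables.} Count by type:
\begin{itemize}
\item the selection variables \(s_c\) contribute \(|C|=O_d(k^2)\);
\item the branch variables \(y_i\) contribute \(t=O_d(k^2)\);
\item the row variables \(a_j\) and \(z_{j,b}\) contribute \(Q(1+\ell)=QL=O_d(k^2L)\);
\item the pair variables \(\omega_P\) contribute \(|\Pi|=O_d(k^2)\).
\end{itemize}
By \Cref{lem:counters}, encoding \eqref{eq:det-one-branch} adds \(O(t+1)\) auxiliary variables and encoding \eqref{eq:det-budget} adds \(O(|C|+Q+1)\). No other auxiliary variables appear, because the conjunctions \(D_{i,v}\) are abbreviations rather than new variables. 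Since \(L\ge1\), the total is \(N=O_d(k^2L)\).

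\textbf{Clauses.} Count by source:
\begin{itemize}
\item the two cardinality encodings contribute \(O(t+|C|+Q+1)=O_d(k^2)\) clauses, each of width at most three;
\item the pair definitions \eqref{eq:det-pair-variable} contribute \(3|\Pi|=O_d(k^2)\) clauses, each of width at most three;
\item for each \(i\in[t]\) and \(A\in\mathcal S\), the formula \(F_{i,A}\) is a disjunction of \(|A|\) terms, and by \eqref{eq:det-outside-C} at most \(d-2\) of them are conjunctions of \(L\) literals while the rest are single literals.
\end{itemize}
Distributing \(F_{i,A}\) yields at most \(L^{|A\cap W|}\le L^{d-2}\) clauses, each choosing one literal per term, so each has width at most \(|A|\le d\). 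Summing over all pairs \((i,A)\) gives \(t|\mathcal S|L^{d-2}\) clauses. Altogether \(M=O_d(t|\mathcal S|L^{d-2}+k^2L)\).

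The only point needing care is the width count for \(F_{i,A}\). There are \(|A|\) terms in total: one \(\neg y_i\), one \(\omega_{\pi(A)}\), \(|A\cap C|-2\) terms \(s_c\), and \(|A\setminus C|\) address conjunctions. This sums to \(2+|A\cap C|-2+|A\setminus C|=|A|\), which relies on \(\pi(A)\subseteq A\cap C\) consisting of two distinct elements, as guaranteed by \Cref{lem:two-hit-preprocessing}. All remaining clauses have width at most \(3\le d\). Everything else is routine bookkeeping.
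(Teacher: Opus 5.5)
Your proof is correct and follows the same route as the paper: you count variables and clauses by type and use \Cref{lem:counters} for the two cardinality constraints. You also write out the width check for \(F_{i,A}\) and use \(L\geq1\) to absorb the \(O_d(k^2)\) terms, both of which the paper handles implicitly or in the text before the lemma.
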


\begin{proof}
The variables indexed by \(C\), \(\Pi\), and the hash functions
contribute \(O_d(k^2)\) variables. The shared table contributes
\(Q(1+\ell)=QL=O_d(k^2L)\) variables. By \Cref{lem:counters}, the two
cardinality constraints introduce only linearly many further variables
and clauses. The pair definitions add \(O_d(k^2)\) clauses. This proves
the bound on \(N\).

For each of the \(t|\mathcal S|\) pairs \((i,A)\), the distributed
hitting condition contributes at most \(L^{d-2}\) clauses. All remaining
constraints contribute \(O_d(k^2L)\) clauses, which proves the bound on
\(M\).
\end{proof}

\subsection{Correctness}

\begin{lemma}\label{lem:det-encoding-correctness}
The formula \(\Phi_{\mathrm{det}}\) is satisfiable if and only if
\((U,\mathcal S,k)\) has a hitting set of size at most \(k\).
\end{lemma}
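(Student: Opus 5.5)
We mirror the proof of \Cref{lem:representation-correctness}, with the single hash branch selected by \eqref{eq:det-one-branch} in place of the map \(\sigma\), and with the pair variables \(\omega_P\) standing in for the two direct-selection literals of each designated pair.

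\emph{Forward direction.} Let \(Z\) be a hitting set with \(|Z|\leq k\). Since \(|Z\cap W|\leq k\), \Cref{lem:det-perfect-hash} provides an index \(i^\ast\) with \(h_{i^\ast}\) injective on \(Z\cap W\); as observed above, \(\operatorname{row}_{i^\ast}\) is then injective on \(Z\cap W\) as well. The assignment is as follows:
\begin{itemize}
\item set \(y_{i^\ast}=1\) and \(y_i=0\) for all \(i\neq i^\ast\);
\item set \(s_c=1\) exactly for \(c\in Z\cap C\), and set each \(\omega_P\) to \(s_a\vee s_b\);
\item for each \(v\in Z\cap W\), activate row \(\operatorname{row}_{i^\ast}(v)\) and store \(\gamma_{i^\ast}(v)\) there, leaving all other rows inactive.
\end{itemize}
By injectivity these writes do not conflict, and the left side of \eqref{eq:det-budget} equals \(|Z\cap C|+|Z\cap W|=|Z|\leq k\). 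The auxiliary counter variables are then filled in using \Cref{lem:counters}. For \(i\neq i^\ast\), every \(F_{i,A}\) is satisfied by \(\neg y_i\). For \(i=i^\ast\), \(Z\) meets \(A\) in one of three places:
\begin{itemize}
\item in \(\pi(A)\), which makes \(\omega_{\pi(A)}\) true;
\item in \((A\cap C)\setminus\pi(A)\), which makes some \(s_c\) true;
\item in some \(v\in A\setminus C=A\cap W\), which makes \(D_{i^\ast,v}\) true.
\end{itemize}
Since the distributed CNF is equivalent to \(F_{i,A}\), all its clauses hold.

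\emph{Reverse direction.} Given a satisfying assignment, \eqref{eq:det-one-branch} selects a unique \(i\) with \(y_i=1\). Put
\[
Z=\{c\in C:s_c=1\}\cup\{v\in W:D_{i,v}=1\}.
\]
Within each row the addresses \(\gamma_i\) are distinct, so at most one \(v\) per active row satisfies \(D_{i,v}\), as in \Cref{lem:unique-decoding}. Hence the second part of \(Z\) has size at most \(\sum_j a_j\), and \eqref{eq:det-budget} gives \(|Z|\leq k\) because \(C\cap W=\varnothing\).

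It remains to show that \(Z\) hits every \(A\in\mathcal S\). Since \(y_i=1\), the CNF equivalent of \(F_{i,A}\) forces one of the remaining disjuncts to be true. If \(\omega_{\pi(A)}\) is true, the clause \((s_a\vee s_b\vee\neg\omega_P)\) from \eqref{eq:det-pair-variable} puts \(a_A\) or \(b_A\) into \(Z\); both lie in \(A\) because \(\pi(A)\subseteq A\cap C\). The other disjuncts directly give an element of \(A\cap Z\).

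The only delicate point is the \(\omega\) variables. Only the implication \(\omega_P\rightarrow s_a\vee s_b\) is needed for soundness, and the forward direction assigns \(\omega_P\) by its definition, so no real obstacle arises. One must also confirm that the preprocessing of \Cref{lem:two-hit-preprocessing} preserves equivalence. Since that is stated as part of the lemma, the instance \((U,\mathcal S,k)\) here is the preprocessed one and the claim follows.
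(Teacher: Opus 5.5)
Your proof is correct and takes essentially the same approach as the paper's. In the forward direction you activate the branch \(h_{i^\ast}\) that is injective on \(Z\cap W\) and write the selected elements into distinct rows. In the reverse direction you decode from the unique selected branch, using the fact that each active row matches at most one element and the clause \((s_a\vee s_b\vee\neg\omega_P)\).
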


\begin{proof}
\emph{Forward direction.}
Let \(Z\subseteq U\) be a hitting set with \(|Z|\leq k\). By
\Cref{lem:det-perfect-hash}, some \(h_{i^\star}\) is injective on
\(Z\cap W\). Set \(y_{i^\star}=1\) and all other branch variables to
zero. Set \(s_c=1\) exactly for \(c\in Z\cap C\).

For every \(v\in Z\cap W\), activate row
\(\operatorname{row}_{i^\star}(v)\) and store
\(\gamma_{i^\star}(v)\) in that row. These rows are distinct because
\(\operatorname{row}_{i^\star}\) is injective on \(Z\cap W\). Leave all
other rows inactive. The left-hand side of \eqref{eq:det-budget} is then
exactly \(|Z|\). Assign each \(\omega_P\) according to
\eqref{eq:det-pair-variable}, and extend the two cardinality constraints
using \Cref{lem:counters}.

If \(i\neq i^\star\), the literal \(\neg y_i\) satisfies \(F_{i,A}\).
For the active branch, fix \(A\in\mathcal S\). Some element of
\(Z\cap A\) lies in \(\pi(A)\), in
\((A\cap C)\setminus\pi(A)\), or in \(A\setminus C\). Accordingly,
\(\omega_{\pi(A)}\), some \(s_c\), or some \(D_{i^\star,v}\) is true.
Thus every hitting condition and its equivalent CNF expansion are
satisfied.

\emph{Reverse direction.}
Suppose that \(\Phi_{\mathrm{det}}\) is satisfiable, and let
\(i^\star\) be the unique index selected by
\eqref{eq:det-one-branch}. Define
\[
Z_C=\{c\in C:s_c=1\},
\qquad
Z_W=\{v\in W:D_{i^\star,v}=1\}.
\]
Within each active row, the stored address matches at most one element.
Therefore,
\[
|Z_W|\leq\sum_{j=1}^{Q}a_j.
\]
Equation~\eqref{eq:det-budget} gives \(|Z_C\cup Z_W|\leq k\).

Fix \(A\in\mathcal S\). The CNF expansion of \(F_{i^\star,A}\) is
satisfied and is equivalent to the original expression. Since
\(y_{i^\star}=1\), one of its remaining terms is true. If this term is
\(\omega_{\pi(A)}\), then \eqref{eq:det-pair-variable} places an element
of \(\pi(A)\) in \(Z_C\). The other possibilities place an element of
\((A\cap C)\setminus\pi(A)\) in \(Z_C\), or an element of
\(A\setminus C\) in \(Z_W\). Hence \(Z_C\cup Z_W\) hits every set in
\(\mathcal S\).
\end{proof}

\subsection{The Kernel Bound and Running Time}

\begin{lemma}\label{lem:deterministic-kernel}
For every fixed \(d\geq3\), the deterministic construction produces an
equivalent \HS{} instance \((U',\mathcal S',k')\) satisfying
\[
\begin{aligned}
|U'|&=O_d(k^2\log k),\\
|\mathcal S'|&=O_d\!\left(k^{d+2}\log^{d-2}k\right),\\
k'&=O_d(k^2\log k).
\end{aligned}
\]
The kernel can be constructed in deterministic polynomial time.
\end{lemma}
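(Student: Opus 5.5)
The proof assembles results already established in this section; the only real work is checking that the parameters add up to the stated bounds.

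\emph{Equivalence.} The pipeline is:
\begin{itemize}
\item apply \Cref{lem:reduced-instance}, which gives an equivalent instance with the same parameter;
\item apply \Cref{lem:two-hit-preprocessing}, which either returns a fixed NO-instance or an equivalent instance $(U,\mathcal S,k)$ with the set $C$ and the pairs $\Pi$ (the parameter only decreases, so all $O_d(\cdot)$ bounds in $k$ still hold);
\item build $\Phi_{\mathrm{det}}$ from the family of \Cref{lem:det-perfect-hash};
\item apply \Cref{lem:variable-pair}.
\end{itemize}
\Cref{lem:det-encoding-correctness} shows that $\Phi_{\mathrm{det}}$ is satisfiable exactly when the instance is a YES-instance. \Cref{lem:variable-pair} turns $\Phi_{\mathrm{det}}$ into an equivalent Hitting Set instance. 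Since every clause has width at most $d$ (\Cref{lem:det-encoding-size}), the output has rank at most $d$. The trivial cases from \Cref{sec:preliminaries}, and any restart that reaches them, are handled by constant-size outputs.

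\emph{Size bounds.} By \Cref{lem:det-perfect-hash}, $q=O_d(k^2)$ and $t=O_d(k^2)$. From $n_W\le |U|=O_d(k^{d-1})$ we get $B\le n_W+1$, so $\ell=O_d(\log k)$ and $L=O_d(\log k)$ (for $k\ge2$; constant $k$ is trivial). \Cref{lem:det-encoding-size} then gives $N=O_d(k^2\log k)$. For the clauses, $|\mathcal S|=O_d(k^d)$, since \Cref{lem:two-hit-preprocessing} ends with an application of \Cref{lem:reduced-instance} and afterwards only annotates the instance. Hence
\[
M=O_d\bigl(k^2\cdot k^d\cdot\log^{d-2}k+k^2\log k\bigr)=O_d\bigl(k^{d+2}\log^{d-2}k\bigr).
\]
\Cref{lem:variable-pair} yields $|U'|=2N$, $k'=N$ and $|\mathcal S'|\le N+M$, which are the claimed bounds.

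\emph{Running time.} The two kernels, the packings and the at most $k$ restarts are polynomial. The prime search and the polynomial labelling in \Cref{lem:det-perfect-hash} are deterministic polynomial, because $q^r$ and $t$ are polynomial in $k$. Computing the rows and addresses for each $h_i$ is polynomial. The CNF expansion has polynomially many clauses of width at most $d$, since $t|\mathcal S|L^{d-2}$ is polynomial. The cardinality encodings of \Cref{lem:counters} are linear, and \Cref{lem:variable-pair} is linear.

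The main obstacle is justifying $|\mathcal S|=O_d(k^d)$ and $n_W=O_d(k^{d-1})$ for the instance that \emph{comes out of} \Cref{lem:two-hit-preprocessing}, relative to its final parameter. This requires that its last restart re-applies \Cref{lem:reduced-instance} and that the final pass deletes nothing. Both follow from how that lemma's proof is written, but they should be stated explicitly.
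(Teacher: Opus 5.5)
Your proposal is correct and follows the paper's proof essentially step for step. You obtain equivalence from \Cref{lem:det-encoding-correctness} and \Cref{lem:variable-pair}, substitute $t=O_d(k^2)$, $|\mathcal S|=O_d(k^d)$ and $L=O_d(\log k)$ into \Cref{lem:det-encoding-size}, and read off $|U'|=2N$, $k'=N$ and $|\mathcal S'|\le N+M$.

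Your extra remark, that the size bounds must hold for the instance leaving \Cref{lem:two-hit-preprocessing}, is a point the paper uses only implicitly. Your justification is sound: every restart re-applies \Cref{lem:reduced-instance} with a smaller parameter, and the final pass deletes nothing.
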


\begin{proof}
Apply \Cref{lem:variable-pair} to \(\Phi_{\mathrm{det}}\). By
\Cref{lem:two-hit-preprocessing,lem:det-encoding-correctness}, the
resulting rank-\(d\) \HS{} instance is equivalent to the input. It has
exactly \(2N\) elements, at most \(N+M\) sets, and output parameter
\(k'=N\).

For \(d=3\), the construction gives \(q\geq n_W\) and hence \(L=1\).
Since \(B\leq\max\{1,n_W\}\) and \(n_W=O_d(k^{d-1})\),
for every fixed \(d\geq3\),
\[
L=O_d(\log k).
\]
Substituting \(t=O_d(k^2)\) and
\(|\mathcal S|=O_d(k^d)\) into \Cref{lem:det-encoding-size} gives
\[
N=O_d(k^2\log k)
\]
and
\[
M=O_d\!\left(k^{d+2}\log^{d-2}k\right).
\]
The variable-pair reduction therefore gives
\[
|U'|=2N,
\qquad
k'=N=O_d(k^2\log k),
\]
and
\[
|\mathcal S'|
\leq N+M
=
O_d\!\left(k^{d+2}\log^{d-2}k\right).
\]

The preprocessing, perfect hash family, refined rows, and CNF formula
are constructed in polynomial time. Since \Cref{lem:variable-pair} is
linear in the formula size, the complete kernelization is deterministic
and runs in polynomial time.
\end{proof}

\Cref{lem:deterministic-kernel} proves \Cref{thm:intro-deterministic}.

\section{Conclusion}\label{sec:conclusion}


In this paper, we obtain significantly improved element-kernels for \HS{}, resolving a longstanding open problem in kernelization.

Our kernelization algorithms first transform an instance of \HS{} into a CNF formula and then transform the resulting formula back into an instance of \HS{}. When constructing the CNF formula, we re-encode the universe rather than assign a separate Boolean variable to each input element. This allows the same variables to participate in the representations of multiple elements, while CNF clauses constrain their joint assignments so that every satisfying assignment can still be decoded into a valid hitting set. In the randomized kernel, this compact representation is achieved through two-level hashing and short addresses. In the deterministic kernel, we combine an explicit perfect hash family with a small set that intersects every input set in at least two elements, thereby preserving the width of the CNF encoding. Finally, the variable-pair reduction transforms the resulting formula back into an instance of $d$-Hitting Set, with each CNF clause becoming an output set. Consequently, the compressed variable representation substantially reduces the number of output elements, albeit potentially at the cost of increasing the number of sets and the value of parameter $k'$.
We believe that this encoding framework may also prove useful for obtaining improved element-kernels for other problems.

\appendix

\section*{Acknowledgements}

We thank Yuxi Liu and Kangyi Tian for valuable discussions.

\section*{Statement on the Use of Generative AI}\label{app:ai}

The authors had explored the framework of using of hashing to compress
solution representations well before generative AI was used in
this project. Our central idea was to represent the elements of a
small solution through hash values, encode their choices in a
bounded-width CNF formula, and then reduce the formula back to
\(d\)-Hitting Set. However, our early constructions were more
complicated. In particular, incorporating the choice of hash
functions while keeping every clause of width at most \(d\)
required a rather involved encoding.

To simplify the construction, recently, we worked interactively with
OpenAI's GPT-5.6 Sol model. These discussions helped
us very quickly find a simpler way to incorporate the hash choices while
preserving the clause-width bound \(d\).
The model was particularly useful for testing alternative encodings,
checking logical equivalence, tracking clause widths and quantitative
bounds, and identifying details that required further justification.
This process helped us arrive at the constructions presented
in this paper.

After completing these constructions, we further asked whether the
same re-encoding framework could yield better kernels. Further interaction with AI suggested constructions of a randomized
kernel with \(O(k\log^{3/2+\epsilon}k)\) elements and a deterministic
kernel with \(O(k^{7/4}\log^{3/2}k)\) elements, both of which remain
to be fully checked.
The authors have not yet conducted a detailed review of these candidate proofs, and these improvements are not included in the current paper. We are making the present results available online now and believe that further enhancements are possible, which warrants continued study.

The authors independently reviewed the
definitions, proofs, encoding equivalences, quantitative bounds, and
citations retained in the manuscript. The authors take full
responsibility for the content and correctness of the paper.

\bibliographystyle{plain}
\bibliography{ref}

@article{AbuKhzam2010,
  author  = {Abu-Khzam, F. N.},
  title   = {A kernelization algorithm for {$d$-Hitting Set}},
  journal = {Journal of Computer and System Sciences},
  volume  = {76},
  number  = {7},
  pages   = {524--531},
  year    = {2010},
  doi     = {10.1016/j.jcss.2009.09.002},
  url     = {https://doi.org/10.1016/j.jcss.2009.09.002}
}

@article{AgrawalKayalSaxena2004,
  author  = {Agrawal, M. and Kayal, N. and Saxena, N.},
  title   = {{PRIMES} is in {P}},
  journal = {Annals of Mathematics},
  volume  = {160},
  number  = {2},
  pages   = {781--793},
  year    = {2004},
  doi     = {10.4007/annals.2004.160.781},
  url     = {https://doi.org/10.4007/annals.2004.160.781}
}

@inproceedings{BannachTantau2018,
  author    = {Bannach, M. and Tantau, T.},
  title     = {Computing Hitting Set kernels by {$\mathrm{AC}^0$}-circuits},
  booktitle = {35th Symposium on Theoretical Aspects of Computer Science
               (STACS 2018)},
  series    = {LIPIcs},
  volume    = {96},
  pages     = {9:1--9:14},
  year      = {2018},
  doi       = {10.4230/LIPIcs.STACS.2018.9},
  url       = {https://doi.org/10.4230/LIPIcs.STACS.2018.9}
}

@book{CyganEtAl2015,
  author    = {Cygan, M. and Fomin, F. V. and Kowalik, {\L}. and
               Lokshtanov, D. and Marx, D. and Pilipczuk, M. and
               Pilipczuk, M. and Saurabh, S.},
  title     = {Parameterized Algorithms},
  publisher = {Springer},
  year      = {2015},
  doi       = {10.1007/978-3-319-21275-3},
  url       = {https://doi.org/10.1007/978-3-319-21275-3}
}

@article{DellVanMelkebeek2014,
  author  = {Dell, H. and van Melkebeek, D.},
  title   = {Satisfiability allows no nontrivial sparsification unless the
             polynomial-time hierarchy collapses},
  journal = {Journal of the ACM},
  volume  = {61},
  number  = {4},
  pages   = {23:1--23:27},
  year    = {2014},
  doi     = {10.1145/2629620},
  url     = {https://doi.org/10.1145/2629620}
}

@book{FominEtAl2019Book,
  author    = {Fomin, F. V. and Lokshtanov, D. and Saurabh, S. and Zehavi, M.},
  title     = {Kernelization: Theory of Parameterized Preprocessing},
  publisher = {Cambridge University Press},
  year      = {2019},
  doi       = {10.1017/9781107415157},
  url       = {https://doi.org/10.1017/9781107415157}
}

@article{FominEtAl2019Implicit,
  author  = {Fomin, F. V. and Le, T.-N. and Lokshtanov, D. and
             Saurabh, S. and Thomass{\'e}, S. and Zehavi, M.},
  title   = {Subquadratic kernels for implicit 3-Hitting Set and
             3-Set Packing problems},
  journal = {ACM Transactions on Algorithms},
  volume  = {15},
  number  = {1},
  pages   = {13:1--13:44},
  year    = {2019},
  doi     = {10.1145/3293466},
  url     = {https://doi.org/10.1145/3293466}
}

@inproceedings{FominEtAl2023,
  author    = {Fomin, F. V. and Le, T.-N. and Lokshtanov, D. and
               Saurabh, S. and Thomass{\'e}, S. and Zehavi, M.},
  title     = {Lossy kernelization for (implicit) Hitting Set problems},
  booktitle = {31st Annual European Symposium on Algorithms (ESA 2023)},
  series    = {LIPIcs},
  volume    = {274},
  pages     = {49:1--49:14},
  year      = {2023},
  doi       = {10.4230/LIPIcs.ESA.2023.49},
  url       = {https://doi.org/10.4230/LIPIcs.ESA.2023.49}
}

@misc{LiuXiao2025,
  author        = {Liu, Y. and Xiao, M.},
  title         = {A refined kernel for {$d$-Hitting Set}},
  year          = {2025},
  eprint        = {2506.24114},
  archiveprefix = {arXiv},
  primaryclass  = {cs.DS},
  url           = {https://arxiv.org/abs/2506.24114},
  note          = {\href{https://arxiv.org/abs/2506.24114}{arXiv:2506.24114}}
}

@article{Ramanujan1919,
  author  = {Ramanujan, S.},
  title   = {A proof of Bertrand's postulate},
  journal = {Journal of the Indian Mathematical Society},
  volume  = {11},
  pages   = {181--182},
  year    = {1919},
  url     = {https://ramanujan.sirinudi.org/Volumes/published/ram24.pdf}
}

@article{vanBevern2014,
  author  = {van Bevern, R.},
  title   = {Towards optimal and expressive kernelization for
             {$d$-Hitting Set}},
  journal = {Algorithmica},
  volume  = {70},
  pages   = {129--147},
  year    = {2014},
  doi     = {10.1007/s00453-013-9774-3},
  url     = {https://doi.org/10.1007/s00453-013-9774-3}
}

@inproceedings{lokshtanov2017lossy,
  title={Lossy kernelization},
  author={Lokshtanov, Daniel and Panolan, Fahad and Ramanujan, MS and Saurabh, Saket},
  booktitle={Proceedings of the 49th Annual ACM SIGACT Symposium on Theory of Computing},
  pages={224--237},
  year={2017}
}

@inproceedings{bessy2023kernelization,
  title={Kernelization for graph packing problems via rainbow matching},
  author={Bessy, St{\'e}phane and Bougeret, Marin and Thilikos, Dimitrios M and Wiederrecht, Sebastian},
  booktitle={Proceedings of the 2023 Annual ACM-SIAM Symposium on Discrete Algorithms (SODA)},
  pages={3654--3663},
  year={2023},
  organization={SIAM}
}

@misc{worker2010,
  author       = {Hans L. Bodlaender and Fedor V. Fomin and Saket Saurabh},
  title        = {Open Problems, {Worker} 2010},
  year         = {2010},
  url          = {https://fpt.wdfiles.com/local--files/open-problems/open-problems.pdf}
}

@misc{fptSchool2014,
  author       = {Marek Cygan and Fedor V. Fomin and Bart M. P. Jansen and {\L}ukasz Kowalik and Daniel Lokshtanov and D{\'a}niel Marx and Marcin Pilipczuk and Micha{\l} Pilipczuk and Saket Saurabh},
  title        = {Open Problems for {FPT School} 2014},
  year         = {2014},
  url          = {https://fptschool.mimuw.edu.pl/opl.pdf}
}

@article{fellows2008faster,
  title={Faster fixed-parameter tractable algorithms for matching and packing problems},
  author={Fellows, Michael R and Knauer, Christian and Nishimura, Naomi and Ragde, Prabhakar and Rosamond, F and Stege, Ulrike and Thilikos, Dimitrios M and Whitesides, Sue},
  journal={Algorithmica},
  volume={52},
  pages={167--176},
  year={2008},
  publisher={Springer}
}

@book{DBLP:series/txtcs/FlumG06,
  author       = {J{\"{o}}rg Flum and
                  Martin Grohe},
  title        = {Parameterized Complexity Theory},
  series       = {Texts in Theoretical Computer Science. An {EATCS} Series},
  publisher    = {Springer},
  year         = {2006}
}

@inproceedings{Sinz2005,
  author    = {Sinz, C.},
  title     = {Towards an Optimal {CNF} Encoding of Boolean Cardinality
               Constraints},
  booktitle = {Principles and Practice of Constraint Programming
               ({CP} 2005)},
  series    = {LNCS},
  volume    = {3709},
  pages     = {827--831},
  year      = {2005},
  doi       = {10.1007/11564751_73},
  url       = {https://doi.org/10.1007/11564751_73}
}

\end{document}